\documentclass[lettersize,journal]{IEEEtran}

\usepackage[disable]{todonotes} 
\definecolor{pastel_mint_green}{RGB}{198, 240, 230} 
\definecolor{purple_custom}{RGB}{65,35,142}
\presetkeys{todonotes}{backgroundcolor=pastel_mint_green,linecolor=pastel_mint_green}{}

\newcounter{todocounter}

\let\oldtodo\todo  
\renewcommand{\todo}[1]{%
    \refstepcounter{todocounter}%
    \oldtodo[inline]{\thetodocounter \ - #1}%
}

\usepackage{amsmath,amssymb,mathtools}
\usepackage{bm}            
\usepackage{mathrsfs}      

\usepackage{enumitem}

\usepackage{graphicx}

\usepackage{xcolor}
 
\usepackage{microtype}

\usepackage[colorlinks=true,linkcolor=blue,citecolor=blue,urlcolor=blue]{hyperref}

\usepackage[nameinlink,capitalise,noabbrev]{cleveref}

\newcommand{\R}{\mathbb{R}}

\newcommand{\Dx}{\nabla_x}
\newcommand{\Dt}{\partial_t}

\usepackage{amsmath,amsfonts}
\usepackage{algorithmic}
\usepackage{algorithm}
\usepackage{array}
\usepackage[caption=false,font=normalsize,labelfont=sf,textfont=sf]{subfig}
\usepackage{textcomp}
\usepackage{stfloats}
\usepackage{url}
\usepackage{verbatim}
\usepackage{cite}
\usepackage{booktabs}

\usepackage{amsthm}
\newtheorem{assumption}{Assumption}
\crefname{assumption}{Assumption}{Assumptions} 
\newtheorem{remark}{Remark}
\newtheorem{lemma}{Lemma}

\newtheorem{theorem}{Theorem}

\usepackage{xcolor}
\usepackage{soul}
\sethlcolor{yellow}

\newif\ifwithproofs
\withproofstrue   

\begin{document}

\title{Certifying Hamilton-Jacobi Reachability Learned via Reinforcement Learning}

\author{Prashant Solanki$^{1}$, Isabelle El-Hajj, Jasper J. van Beers, Erik-Jan van Kampen, Coen C. de Visser
\thanks{All authors are with the Control \& Operations department of the Faculty of Aerospace Engineering, Delft University of Technology, 2629 HS, Delft, the Netherlands.}
\thanks{$^{1}$Corresponding author}%
}

\markboth{Journal of \LaTeX\ Class Files,~Vol.~14, No.~8, August~2021}%
{Shell \MakeLowercase{\textit{et al.}}: A Sample Article Using IEEEtran.cls for IEEE Journals}

\IEEEpubid{0000--0000/00\$00.00~\copyright~2021 IEEE}

\maketitle

\begin{abstract}
We present a framework to \emph{certify} Hamilton--Jacobi (HJ) reachability learned by reinforcement learning (RL). Building on a discounted initial time \emph{travel-cost} formulation that makes small-step RL value iteration provably equivalent to a forward Hamilton--Jacobi (HJ) equation with damping, we convert certified learning errors into calibrated inner/outer enclosures of strict backward reachable tube. The core device is an additive-offset identity: if $W_\lambda$ solves the discounted travel-cost Hamilton--Jacobi--Bellman (HJB) equation, then $W_\varepsilon:=W_\lambda + \varepsilon$ solves the same PDE with a constant offset $\lambda\varepsilon$. This means that a uniform value error is \emph{exactly} equal to a constant HJB offset. We establish this uniform value error via two routes: (A) a Bellman operator-residual bound, and (B) a HJB PDE-slack bound. Our framework preserves HJ-level safety semantics and is compatible with deep RL. We demonstrate the approach on a double-integrator system by formally certifying, via satisfiability modulo theories (SMT), a value function learned through reinforcement learning to induce provably correct inner and outer backward-reachable set enclosures over a compact region of interest.
\end{abstract}

\begin{IEEEkeywords}
Hamilton-Jacobi Reachability, Reinforcement Learning, Satisfiability Modulo Theories, Certification.
\end{IEEEkeywords}

\section{Introduction}

\todo{Global remark: ensure the consistency $W_\varepsilon$ versus $V_\varepsilon$.}

Safety verification for autonomous systems hinges on computing (or certifying) the set of initial states from which a safe/unsafe region can (or cannot) be reached within a time horizon. Hamilton Jacobi (HJ) reachability encodes this guarantee via the sign of a value function that solves an HJ partial differential equation (PDE) or variational inequality (VI) in the viscosity sense \cite{mitchell2005time,lygeros2004reachability,bansal2017hj}. Classical grid based solvers suffer from the curse of dimensionality \cite{darbon2016algorithms}, and even high-order schemes remain costly at scale. Approximation strategies include decomposition \cite{chen2018decomposition}, algorithms mitigating high dimensional complexity \cite{darbon2016algorithms}, and operator-theoretic approaches (Hopf/Koopman) \cite{umathe2022reachability}.

Learning-based methods pursue scalability along two directions. (i) Self supervised reachability learning: DeepReach trains neural networks to satisfy the terminal/reach cost HJ residual, but it requires an explicit dynamics model \cite{bansal2021deepreach}. Certified Approximate Reachability (CARe) then converts bounded HJ losses into $\varepsilon$ accurate inner/outer enclosures using $\delta$ complete Satisfiability Modulo Theory (SMT) and Counterexample guided Inductive Synthesis (CEGIS) \cite{solanki2025certified}. (ii) Reinforcement Learning (RL): RL offers the possibility of learning from experience without necessarily requiring a model of the dynamics. However, standard terminal/reach cost formulations are not naturally compatible with small step discounted Bellman updates and typically do not recover exact strict reach/avoid sets. Some approaches have attempted to bridge this gap, but their presented formalism may alter safety semantics \cite{fisac2019bridging,akametalu2023minimum,wiltzer2022distributional,ganai2023iterative}.

A recently proposed discounted travel cost formulation resolves this incompatibility \cite{SolankiHJRL2025}: with cost zero off-target and strictly negative on target, the discounted Bellman iteration is provably equivalent to a forward HJB with damping, converging to its viscosity solution and recovering strict reach/avoid sets by sign, while remaining amenable to RL training. Because the learned object in this RL setting is the travel cost value (rather than the terminal/reach cost value used in DeepReach \cite{bansal2021deepreach}), CARe cannot be applied verbatim \cite{solanki2025certified}. This paper fills that gap by certifying RL learned HJ values based on the formulation in \cite{SolankiHJRL2025}.

The paper’s contributions are threefold: (i) a characterization of the discounted HJB equations corresponding to an additively shifted value function (\Cref{sec:hjb_add_eps}), (ii) certified value-error bounds derived from Bellman and PDE residuals (\Cref{sec:rl_counterpart_eps}), and (iii) SMT-based certification pipelines enabling provable reachability enclosures (\Cref{sec:smt_cert}).

\ifwithproofs
The structure of the article is as follows. \Cref{sec:problem_setup} formulates the time-invariant control problem studied in this paper and states the standing assumptions used throughout. \Cref{sec:hjb_add_eps} introduces the additive-offset identity, showing how a uniform value-function error induces a constant offset in the associated forward HJB equation. \Cref{sec:rl_counterpart_eps} then considers the discrete-time Bellman counterpart of the forward HJB : it defines the one step Bellman operator, leverages contraction and uniqueness properties to convert a certified Bellman residual bound into a uniform value function error bound, and connects this bound to reachability enclosures. \Cref{sec:smt_cert} presents two SMT based certification routes for establishing $\varepsilon_{\mathrm{val}}$ over a compact region of interest. \Cref{sec:experiment} applies the proposed certification framework to a canonical benchmark problem the double integrator. Finally, \Cref{sec:conclusion} summarizes the contributions of the paper and outlines directions for future research.
\fi

\section{Problem Setup} 
\label{sec:problem_setup}

This section formulates the time invariant optimal control problem studied in this paper. We define the system dynamics, admissible controls, running cost, and associated Hamiltonian, and state the standing regularity assumptions used throughout. We then introduce the discounted travel-time value function, its Hamilton Jacobi Bellman (HJB) characterization, and an equivalent forward/initial time formulation used in subsequent sections.

\textbf{System:}
We consider the deterministic, time-invariant control system
\begin{equation}
\label{eq:dynamics_ti}
\begin{aligned}
\dot x(s) &= f\!\big(x(s),\,u(s)\big),\\
x(t) &= x\in\R^n,\qquad s\in[t,T],
\end{aligned}
\end{equation}
where $x(\cdot)\in\R^n$ and $u(\cdot)$ is a measurable signal with values in a compact set $U\subset\R^m$.
Admissible controls on $[t,T]$ are
\begin{equation}
\mathcal M(t):=\big\{\,u:[t,T]\!\to\!U\ \text{measurable}\,\big\}.
\end{equation}
\textbf{Notation and Hamiltonians:}
For $V:[0,T]\times\R^n\to\R$, write $\Dt V:= \frac{\partial V}{\partial t}$ and $\Dx V:=\frac{\partial V}{\partial x}$.
We allow a \emph{time-dependent} running cost $h:[0,T]\times\R^n\times U\to\R$.
The (time-indexed) Hamiltonian is
\begin{equation}
\label{eq:Hamiltonian_t}
H(t,x,p)\;:=\;\inf_{u\in U}\,\Big\{\,h(t,x,u)+p\!\cdot\! f(x,u)\,\Big\},
\end{equation}
where $p:= \nabla_{x} V(t, x)$.

\begin{assumption}[Compact input set]\label{ass:A1}
The control set \(U\subset\R^m\) is nonempty and compact.
\end{assumption}

\begin{assumption}[Lipschitz dynamics in state]\label{ass:A2}
There exists \(L_f>0\) such that
\(\|f(x_1,u)-f(x_2,u)\|\le L_f\|x_1-x_2\|\) for all \(x_1,x_2\in\R^n\) and \(u\in U\).
\end{assumption}

\begin{assumption}[Uniform growth bound]\label{ass:A3}
There exists \(M_f>0\) such that \(\|f(x,u)\|\le M_f\) for all \((x,u)\in\R^n\times U\).
\end{assumption}

\begin{assumption}[Continuity in the control]\label{ass:A4}
For each fixed \(x\in\R^n\), the mapping \(u\mapsto f(x,u)\) is continuous on \(U\).
\end{assumption}

\begin{assumption}[Time dependent running cost regularity]\label{ass:A5}
The running cost \(h:[0,T]\times\R^n\times U\to\R\) is measurable and satisfies:
\begin{enumerate}\setlength{\itemsep}{2pt}
\item[(i)] (\emph{Lipschitz in \(x\), uniform in \(t,u\)}) There exists \(L_h>0\) such that
\(|h(t,x_1,u)-h(t,x_2,u)|\le L_h\|x_1-x_2\|\) for all \(t\in[0,T]\), \(x_1,x_2\in\R^n\), \(u\in U\).
\item[(ii)] (\emph{Uniform bound}) There exists \(M_h>0\) such that \(|h(t,x,u)|\le M_h\) for all \((t,x,u)\in[0,T]\times\R^n\times U\).
\item[(iii)] (\emph{Continuity in \(u\), uniform in \(t\)}) For each \((t,x)\), the map \(u\mapsto h(t,x,u)\) is continuous on \(U\).
\item[(iv)] (\emph{Time regularity, uniform on compacts}) For every compact \(K\subset\R^n\),
\[
\lim_{\delta\to 0^+}\ \sup_{\substack{t\in[0,T-\delta]\\ x\in K,\ u\in U}}
\big|h(t+\delta,x,u)-h(t,x,u)\big| \;=\;0.
\]
Equivalently, \(t\mapsto h(t,x,u)\) is continuous on \([0,T]\) uniformly for \((x,u)\in K\times U\).
\end{enumerate}
\end{assumption}


\subsection{Discounted Travel Value}
\label{subsec:discounted_value_TI}

\textbf{Discount weight:} 
Fix a discount parameter $\lambda \in \mathbb{R}$ and define the \emph{discount weight}
\begin{equation}
\label{eq:omega_def}
\omega_t(s):=e^{\lambda(t-s)}\in(0,\infty),\qquad s\in[t,T].
\end{equation}

\begin{remark}[Discount and contraction]
For the reinforcement-learning (RL) interpretation, we assume throughout that $\lambda>0$.
Then for all $s\in[t,T]$,
\[
\omega_t(s)=e^{\lambda(t-s)}=e^{-\lambda(s-t)}\in\big[e^{-\lambda(T-t)},\,1\big]\subset(0,1].
\]
In particular, $\omega_t(t)=1$ and $\omega_t(s)<1$ for all $s>t$.
Moreover, for any discrete step size $\sigma>0$, the continuation factor
$\gamma:=e^{-\lambda\sigma}\in(0,1)$, which yields a strict contraction of the
corresponding discounted Bellman operator in the sup-norm \cite{SolankiHJRL2025}.
\end{remark}

\textbf{Discounted travel value:}
We define the value directly in discounted form:
\begin{equation}
\label{eq:discounted_value_TI}
\begin{aligned}
V_\lambda(t,x)
&:= \inf_{u(\cdot)\in\mathcal M(t)}
\int_{t}^{T} \omega_t(s)\,
h\!\big(s,\,x^{u}_{t,x}(s),\,u(s)\big)\,ds,
\end{aligned}
\end{equation}
with terminal condition \(V_\lambda(T,x)=0\), where $x^{u}_{t,x}(\cdot)$ solves \cref{eq:dynamics_ti} under $u(\cdot)$.

\textbf{Discounted HJB:}
Under \Cref{ass:A1,ass:A2,ass:A3,ass:A4,ass:A5}, \(V_\lambda\) is the unique bounded viscosity solution of
\begin{equation}
\label{eq:HJB_discounted_TI}
\begin{aligned}
\Dt V_\lambda(t,x)\;+\;H\!\big(t,x,\Dx V_\lambda(t,x)\big)\;-\;\lambda\,V_\lambda(t,x) &= 0,\\
V_\lambda(T,x) &= 0.
\end{aligned}
\end{equation}
The proof of this can be found in the principle result of \cite{SolankiHJRL2025}. 


\begin{assumption}[Off-target zero]\label{ass:S0}
For a given open target \(\mathcal T\subset\R^n\),
\[ h(t,x,u)=0 \quad \text{for all } (t,x,u)\in[0,T]\times(\R^n\!\setminus\!\mathcal T)\times U. \]
\end{assumption}

\begin{assumption}[On-target negativity]\label{ass:S2}
For every \(t\in[0,T]\) and \(x\in\mathcal T\), we have
\[\inf_{u\in U} h(t,x,u)\;<\;0.\]
\end{assumption}

The strictly positive weights \(\omega_t(\cdot)\) preserve the sign logic. Therefore, as shown in \cite{SolankiHJRL2025}, under \Cref{ass:S0,ass:S2}, the sign of \(V_\lambda\) recovers the strict backward-reachability set.

\subsection{Initial/Forward Time Formulation}
\label{subsec:time_reversed_H}

Define the forward variable \(\tau:=T-t\) and \(W_\lambda(\tau,x):=V_\lambda(T-\tau,x)\).
Then \(\partial_t V_\lambda(t,x)=-\partial_\tau W_\lambda(\tau,x)\), and \cref{eq:HJB_discounted_TI} becomes
\begin{equation}
\label{eq:HJB_forward_tau}
\begin{aligned}
\partial_\tau W_\lambda(\tau,x)\;&=\; \tilde H\!\big(\tau,x,\nabla_x W_\lambda(\tau,x)\big)\;-\;\lambda\,W_\lambda(\tau,x),\qquad \\ 
W_\lambda(0,x)&=0,
\end{aligned}
\end{equation}
where the \emph{time-reversed Hamiltonian} is
\begin{equation}
\begin{aligned}
\label{eq:Hamiltonian_time_reversed}
\tilde H(\tau,x,p)\;&:=\;H\!\big(T-\tau,\,x,\,p\big)
\\ \;&=\;\inf_{u\in U}\,\Big\{\,h(T-\tau,x,u)+p\!\cdot\! f(x,u)\,\Big\}.
\end{aligned}
\end{equation}
Equations \cref{eq:HJB_discounted_TI} and \cref{eq:HJB_forward_tau} are equivalent under the change of variables \(t\leftrightarrow \tau\); \cref{eq:HJB_forward_tau} is convenient for forward in \(\tau\) dynamic programming and connects directly to discounted value iteration in RL \cite{SolankiHJRL2025}.

\section{Discounted HJB for an Additively Shifted Value}
\label{sec:hjb_add_eps}

This section establishes a central conceptual insight of the paper: a uniform value-function error is exactly equivalent to a constant offset-term perturbation in the discounted HJB equation.

\ifwithproofs
More specifically, we study an \emph{additively shifted} discounted value $V_\varepsilon:=V_\lambda+\varepsilon$ (\Cref{eq:Veps_def_add}). We establish the Dynamic Programming Principle (DPP) associated with $V_\varepsilon$ (\Cref{thm:DPP_add}). The DPP, together with \Cref{lem:local_exist_shift,lem:local_univ_shift}, then allows us to establish the viscosity sub- and supersolution inequalities for the PDE (\cref{thm:HJB_shifted_visc}) that admits $V_\varepsilon$ as its associated viscosity solution. This PDE is \cref{eq:HJB_shifted}, which is \cref{eq:HJB_discounted_TI} with an additional constant offset term $\lambda \varepsilon$. We then convert the PDE to an initial time formulation. This yields, what we have called, the forward form in which the same $\lambda\varepsilon$ offset persists (\Cref{eq:HJB_ttg}). 
\fi

\subsection{Value Function with Additive Offset}

We work under the assumptions in \Cref{sec:problem_setup}. Moreover, recall the definition of the discount weight from \cref{eq:omega_def}.

For $(t,x)\in[0,T]\times\R^n$, define the value function with additive offset as
\begin{equation}
\label{eq:Veps_def_add}
V_\varepsilon(t,x)
:= \inf_{u(\cdot)\in\mathcal M(t)}
\int_{t}^{T}\omega_t(s)\,h\!\big(s,x^{u}_{t,x}(s),u(s)\big)\,ds\;+\;\varepsilon
\end{equation}
where the trajectories follow the system dynamics shown in \cref{eq:dynamics_ti} and $\epsilon \in \mathbb{R}$ is a constant scalar. Note that
\(
V_\varepsilon(T,x)=\varepsilon
\)
and, trivially,
\(
V_\varepsilon(t,x)=V_{\lambda}(t,x)+\varepsilon
\)

\subsection{Dynamic Programming Principle Associated With Value Function with Additive Offset}
\begin{theorem}[Dynamic Programming Principle for the Additively-shifted Value $V_\varepsilon$]
\label{thm:DPP_add}
Fix $(t,x)\in[0,T)\times\R^n$ and $\sigma\in(0,T-t]$. For the additively shifted value
\[
V_\varepsilon(t,x)
= \inf_{u(\cdot)\in\mathcal M(t)} \int_{t}^{T}\!\omega_t(s)\,h\!\big(s,x^{u}_{t,x}(s),u(s)\big)\,ds\;+\;\varepsilon,
\]
the following DPP holds:
\begin{equation}
\label{eq:DPP_add}
\begin{aligned}
V_\varepsilon(t,x)
=\inf_{u(\cdot)\in\mathcal M(t)} \Big\{&
\int_{t}^{t+\sigma}\!\omega_t(s)\,h\!\big(s,x^{u}_{t,x}(s),u(s)\big)\,ds\\
&\quad +\ \omega_t(t+\sigma)\,V_\varepsilon\!\big(t+\sigma, x^{u}_{t,x}(t+\sigma)\big)\\
&\quad +\ \big(1-\omega_t(t+\sigma)\big)\,\varepsilon \Big\}.
\end{aligned}
\end{equation}
\end{theorem}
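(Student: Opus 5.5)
The plan is to reduce the claim to the standard DPP for the unshifted discounted value $V_\lambda$ and then add the offset algebraically. Since $V_\varepsilon=V_\lambda+\varepsilon$ pointwise, substituting into the right-hand side of \cref{eq:DPP_add} gives, for each control,
\[
\omega_t(t+\sigma)\big(V_\lambda(t+\sigma,\cdot)+\varepsilon\big)+\big(1-\omega_t(t+\sigma)\big)\varepsilon
=\omega_t(t+\sigma)V_\lambda(t+\sigma,\cdot)+\varepsilon .
\]
The constant $\varepsilon$ does not depend on $u(\cdot)$, so it can be taken outside the infimum. The statement is therefore equivalent to the unshifted DPP
\[
V_\lambda(t,x)=\inf_{u(\cdot)\in\mathcal M(t)}\Big\{\int_t^{t+\sigma}\omega_t(s)h\,ds+\omega_t(t+\sigma)V_\lambda\big(t+\sigma,x^u_{t,x}(t+\sigma)\big)\Big\}.
\]
All remaining work goes into proving this identity.

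For the unshifted DPP, first use the semigroup property of the weights, $\omega_t(s)=\omega_t(t+\sigma)\,\omega_{t+\sigma}(s)$ for $s\ge t+\sigma$. Then use the flow property of trajectories: with $y=x^u_{t,x}(t+\sigma)$, one has $x^u_{t,x}(s)=x^{u}_{t+\sigma,y}(s)$ for $s\ge t+\sigma$. Together these split the cost of any control into
\[
\int_t^{t+\sigma}\omega_t h\,ds+\omega_t(t+\sigma)\int_{t+\sigma}^T\omega_{t+\sigma}(s)h\,ds .
\]

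\textbf{Step ``$\ge$''.} The tail integral is at least $V_\lambda(t+\sigma,y)$, because the restriction of $u$ to $[t+\sigma,T]$ lies in $\mathcal M(t+\sigma)$. Since $\omega_t(t+\sigma)>0$, the inequality is preserved after multiplying by the weight. Taking the infimum over $u$ gives $V_\lambda\ge$ RHS.

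\textbf{Step ``$\le$''.} Fix $\eta>0$ and any $u_1\in\mathcal M(t)$. Pick $u_2\in\mathcal M(t+\sigma)$ that is $\eta$-optimal for $V_\lambda(t+\sigma,y)$. Concatenate them: $u_1$ on $[t,t+\sigma)$ and $u_2$ afterwards. The result is measurable and $U$-valued, so it is admissible. Its cost is at most the bracket for $u_1$ plus $\omega_t(t+\sigma)\eta\le\eta$. Taking the infimum over $u_1$ and letting $\eta\to0$ gives the reverse inequality.

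The only mildly delicate point is the flow property and well-posedness of trajectories under measurable controls. This follows from \Cref{ass:A1,ass:A2,ass:A3} via Carathéodory existence and uniqueness. Integrability of the running cost follows from \Cref{ass:A5}(ii), which bounds $h$ and hence makes every integral finite. No measurable-selection issue arises, because $u_2$ is chosen only for the single point $y$ determined by the fixed $u_1$.
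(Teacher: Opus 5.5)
Your proof is correct and follows the paper's route. The paper performs the same algebraic reduction: it writes $V_\varepsilon=V_\lambda+\varepsilon$, moves the $u$-independent constant through the infimum, and regroups $\omega_t(t+\sigma)(V_\varepsilon-\varepsilon)+\varepsilon$ to arrive at the unshifted discounted DPP. The only difference is that the paper takes that unshifted DPP directly from Lemma~2 of \cite{SolankiHJRL2025}, whereas your weight-semigroup and flow splitting with $\eta$-optimal concatenation proves it, so your version is self-contained.
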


\ifwithproofs
\begin{proof}
Let $V$ denote the \emph{unshifted} discounted value,
\begin{equation}\label{eq:unshifted_value}
V_{\lambda}(t,x)
:= \inf_{u(\cdot)\in\mathcal M(t)}
\int_{t}^{T}\!\omega_t(s)\,h\!\big(s,x^{u}_{t,x}(s),u(s)\big)\,ds,
\end{equation}
so that by definition
\begin{equation}\label{eq:V_eps_as_shift}
V_\varepsilon(t,x) = V_{\lambda}(t,x) + \varepsilon,
\qquad (t,x)\in[0,T]\times\R^n.
\end{equation}

Assume that $V$ satisfies the standard discounted DPP (Lemma 2 of \cite{SolankiHJRL2025}): for any $(t,x)\in[0,T)\times\R^n$ and any $\sigma\in(0,T-t]$,
\begin{multline}\label{eq:DPP_unshifted}
V_{\lambda}(t,x)
=
\inf_{u(\cdot)\in\mathcal M(t)}
\Bigg\{
\int_t^{t+\sigma} \omega_t(s)\,
h\!\big(s,x^{u}_{t,x}(s),u(s)\big)\,ds
+ \\
\omega_t(t+\sigma)\,
V_{\lambda}\big(t+\sigma,x^{u}_{t,x}(t+\sigma)\big)
\Bigg\}.
\end{multline}

We now derive the DPP for $V_\varepsilon$ from \cref{eq:DPP_unshifted} by
simple algebra.

By \cref{eq:V_eps_as_shift} and \cref{eq:DPP_unshifted},
\begin{multline}\label{eq:Veps_from_V_step1}
    V_\varepsilon(t,x)
= V_{\lambda}(t,x) + \varepsilon \\
=
\inf_{u(\cdot)\in\mathcal M(t)}
\Bigg\{
\int_t^{t+\sigma} \omega_t(s)\,
h\!\big(s,x^{u}_{t,x}(s),u(s)\big)\,ds
+ \\ 
\omega_t(t+\sigma)\,
V\big(t+\sigma,x^{u}_{t,x}(t+\sigma)\big)
\Bigg\}
+\varepsilon.
\end{multline}
Since $\varepsilon$ does not depend on $u$, we may move it inside the infimum:
for any functional $F$ and constant $c$,
$\inf_u\{F(u)\} + c = \inf_u\{F(u)+c\}$.
Applying this to \cref{eq:Veps_from_V_step1} gives
\begin{multline}\label{eq:Veps_from_V_step2}
    V_\varepsilon(t,x)
=
\inf_{u(\cdot)\in\mathcal M(t)}
\Bigg\{
\int_t^{t+\sigma} \omega_t(s)\,
h\!\big(s,x^{u}_{t,x}(s),u(s)\big)\,ds
+ \\
\omega_t(t+\sigma)\,
V\big(t+\sigma,x^{u}_{t,x}(t+\sigma)\big)
+\varepsilon
\Bigg\}.
\end{multline}

Next, use the relation $V_{\lambda} = V_\varepsilon - \varepsilon$ from
\cref{eq:V_eps_as_shift} at time $t+\sigma$:
\[
V_{\lambda}\big(t+\sigma,x^{u}_{t,x}(t+\sigma)\big)
=
V_\varepsilon\big(t+\sigma,x^{u}_{t,x}(t+\sigma)\big) - \varepsilon.
\]
Substituting this into \cref{eq:Veps_from_V_step2} yields
\begin{multline}\label{eq:Veps_from_V_step3}
    V_\varepsilon(t,x)
=
\inf_{u(\cdot)\in\mathcal M(t)}
\Bigg\{
\int_t^{t+\sigma} \omega_t(s)\,
h\!\big(s,x^{u}_{t,x}(s),u(s)\big)\,ds \\
\hspace{4.5em}+
\omega_t(t+\sigma)\,
\Big(V_\varepsilon\big(t+\sigma,x^{u}_{t,x}(t+\sigma)\big)-\varepsilon\Big)
+\varepsilon
\Bigg\}.
\end{multline}

Now expand the last two terms inside the braces:
\begin{align*}
&\omega_t(t+\sigma)\,\Big(V_\varepsilon(\cdot)-\varepsilon\Big)
+\varepsilon\\
&=
\omega_t(t+\sigma)\,V_\varepsilon\big(t+\sigma,x^{u}_{t,x}(t+\sigma)\big)
-\omega_t(t+\sigma)\,\varepsilon
+\varepsilon\\
&=
\omega_t(t+\sigma)\,V_\varepsilon\big(t+\sigma,x^{u}_{t,x}(t+\sigma)\big)
+
\big(1-\omega_t(t+\sigma)\big)\,\varepsilon.
\end{align*}
Substituting this back into \cref{eq:Veps_from_V_step3}, we precisely obtain \cref{eq:DPP_add}. This completes the proof.
\end{proof}
\else
A complete proof appears in the archived version of the paper \hl{citearchiveversion}. It is adapted from the proof of Lemma 2 \cite{SolankiHJRL2025}.
\fi

\subsection{HJB PDE Associated with $V_\varepsilon$}

This subsection first presents \Cref{lem:local_univ_shift,lem:local_exist_shift,lem:local_univ_shift} which are used to establish the local implications between pointwise violations of the HJB inequality and violations of the short-horizon DPP. Thereafter, \Cref{thm:HJB_shifted_visc} establishes an exact correspondence between uniform additive errors in the value function and constant-offset HJB inequalities.

\begin{lemma}\label{lem:local_exist_shift}
 Let a smooth test function $\phi\in C^1([0,T]\times\R^n)$ and suppose
at $(t_0,x_0)$
\begin{multline}\label{eq:LES_cond_fixed}
\phi_t(t_0,x_0)+H\!\big(t_0,x_0,D\phi(t_0,x_0)\big)
\\ -\lambda\,\phi(t_0,x_0)+\lambda\,\varepsilon \;\le\; -\theta,
\quad \theta>0.
\end{multline}
Then there exist $u^\ast\in U$ and $\delta_0>0$ such that, for the trajectory
$x(\cdot)$ solving $\dot x=f(x,u^\ast)$ with $x(t_0)=x_0$, and all
$\delta\in(0,\delta_0]$,
\begin{multline}\label{eq:LES_goal_fixed}
     e^{-\lambda\delta}\,\phi(t_0+\delta,x(\delta))-\phi(t_0,x_0) 
+ \\ \!\int_{0}^{\delta}\! e^{-\lambda r}\,h\!\big(t_0+r,x(r),u^\ast\big)\,dr \\
+\big(1-e^{-\lambda\delta}\big)\,\varepsilon
\;\le\; -\frac{\theta}{2}\!\int_{0}^{\delta} e^{-\lambda r}\,dr .
\end{multline}
\end{lemma}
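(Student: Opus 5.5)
Write $x(r)$ for the state at time $t_0+r$, as in the statement. The plan is to choose a single constant control $u^\ast$ that nearly attains the infimum in $H$ at $(t_0,x_0)$, and then integrate along its trajectory. The left-hand side of \cref{eq:LES_goal_fixed} will be written as the integral of a derivative, and continuity will show that the integrand stays below $-\tfrac{\theta}{2}e^{-\lambda r}$ for small $r$.

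\textbf{Step 1 (choice of $u^\ast$).} By \Cref{ass:A1,ass:A4} and \Cref{ass:A5}(iii), the map $u\mapsto h(t_0,x_0,u)+D\phi(t_0,x_0)\cdot f(x_0,u)$ is continuous on the compact set $U$. Its infimum, $H(t_0,x_0,D\phi(t_0,x_0))$, is therefore attained at some $u^\ast$. Hence \cref{eq:LES_cond_fixed} reads
\[
\phi_t(t_0,x_0)+h(t_0,x_0,u^\ast)+D\phi(t_0,x_0)\cdot f(x_0,u^\ast)-\lambda\phi(t_0,x_0)+\lambda\varepsilon\le-\theta .
\]
If one prefers to avoid the attainment argument, a $\theta/4$-minimizer would also work, at the cost of shrinking the margins below.

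\textbf{Step 2 (integral representation).} Let $x(\cdot)$ solve $\dot x=f(x,u^\ast)$ with $x(0)=x_0$; it exists and is unique by \Cref{ass:A2}. Since $\phi\in C^1$ and $x(\cdot)$ is Lipschitz, the function $g(r):=e^{-\lambda r}\phi(t_0+r,x(r))$ is absolutely continuous, and
\[
g(\delta)-g(0)=\int_0^\delta e^{-\lambda r}\big[\phi_t+D\phi\cdot f(x(r),u^\ast)-\lambda\phi\big](t_0+r,x(r))\,dr .
\]
We also use the identity $(1-e^{-\lambda\delta})\varepsilon=\int_0^\delta \lambda\varepsilon\, e^{-\lambda r}\,dr$. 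With these, the left-hand side of \cref{eq:LES_goal_fixed} equals $\int_0^\delta e^{-\lambda r}\Psi(r)\,dr$, where
\[
\Psi(r):=\phi_t(t_0+r,x(r))+D\phi(t_0+r,x(r))\cdot f(x(r),u^\ast)-\lambda\phi(t_0+r,x(r))+h(t_0+r,x(r),u^\ast)+\lambda\varepsilon .
\]

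\textbf{Step 3 (continuity of $\Psi$ at $0$; main obstacle).} By Step 1, $\Psi(0)\le-\theta$. It therefore suffices to find $\delta_0>0$ such that $\Psi(r)\le-\theta/2$ for all $r\in[0,\delta_0]$; multiplying by $e^{-\lambda r}>0$ and integrating then gives the claim. The terms involving $\phi$, $\phi_t$ and $D\phi$ are continuous in $r$ because $\phi\in C^1$, $x(\cdot)$ is continuous, and $f(\cdot,u^\ast)$ is Lipschitz.

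The delicate term is $h(t_0+r,x(r),u^\ast)$, since $h$ is only measurable in general. We split it as
\[
|h(t_0+r,x(r),u^\ast)-h(t_0,x_0,u^\ast)|\le L_h\|x(r)-x_0\|+|h(t_0+r,x_0,u^\ast)-h(t_0,x_0,u^\ast)| .
\]
The first piece is at most $L_hM_fr$ by \Cref{ass:A5}(i) and \Cref{ass:A3}. The second tends to $0$ as $r\to0^+$ by \Cref{ass:A5}(iv) applied with $K=\{x_0\}$. This is the step I expect to need the most care: the time regularity in \Cref{ass:A5}(iv) is stated only for shifts within $[0,T-\delta]$, so we need $t_0<T$ and $\delta_0\le T-t_0$. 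The same continuity also guarantees that the integrand is measurable and bounded.

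Once each term lies within $\theta/10$ of its value at $r=0$, we obtain $\Psi\le-\theta/2$ on $[0,\delta_0]$, and \cref{eq:LES_goal_fixed} follows for every $\delta\in(0,\delta_0]$.
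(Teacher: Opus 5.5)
Your proof is correct and follows essentially the same route as the paper. You pick $u^\ast$ (an exact minimizer via compactness of $U$, where the paper uses an $\eta$-minimizer with $\eta<\theta/4$), write the left-hand side as $\int_0^\delta e^{-\lambda r}\Psi(r)\,dr$, and use continuity together with $\delta_0\le T-t_0$ to keep the integrand below $-\theta/2$; your explicit handling of the $h$-term via \Cref{ass:A5}(i),(iv) and \Cref{ass:A3} is more careful than the paper's bare appeal to continuity of $h$ in $(t,x)$.
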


\ifwithproofs
\begin{proof}
Assume \eqref{eq:LES_cond_fixed} holds at $(t_0,x_0)$ for some $\theta>0$.
Fix $\eta\in(0,\theta/4)$. By definition of the Hamiltonian as an infimum,
there exists $u^\ast\in U$ such that
\begin{multline}\label{eq:u_star_selection_rig}
    h(t_0,x_0,u^\ast)+\nabla_x\phi(t_0,x_0)\cdot f(x_0,u^\ast) \\
\le H\!\big(t_0,x_0,\nabla_x\phi(t_0,x_0)\big)+\eta
\end{multline}

Let $x(\cdot)$ solve $\dot x(s)=f(x(s),u^\ast)$ with $x(t_0)=x_0$, and define
the shifted-time trajectory $y(r):=x(t_0+r)$ for $r\in[0,\delta]$.
Define
\begin{multline}\label{eq:Fdef_exist_rig}
    F(\delta)
:= e^{-\lambda\delta}\phi(t_0+\delta,y(\delta))-\phi(t_0,x_0) \\
+\int_{0}^{\delta} e^{-\lambda r}\,h(t_0+r,y(r),u^\ast)\,dr \\
 +\big(1-e^{-\lambda\delta}\big)\varepsilon
\end{multline}

\noindent\emph{Step 1: Exact integral representation.}
Let $\Psi(r):=e^{-\lambda r}\phi(t_0+r,y(r))$. Since $\phi\in C^1$ and $y$ is
absolutely continuous, $\Psi$ is absolutely continuous and for a.e.\ $r$,
\begin{multline*}
    \Psi'(r)=e^{-\lambda r}\Big(\phi_t(t_0+r,y(r)) \\
+\nabla_x\phi(t_0+r,y(r))\cdot f(y(r),u^\ast)
-\lambda\phi(t_0+r,y(r))\Big)
\end{multline*}
Integrating from $0$ to $\delta$ and using
$(1-e^{-\lambda\delta})\varepsilon=\int_0^\delta \lambda e^{-\lambda r}\varepsilon\,dr$
yields the exact identity
\begin{equation}\label{eq:Fexact_exist_rig}
F(\delta)=\int_{0}^{\delta} e^{-\lambda r}\,G(t_0+r,y(r),u^\ast)\,dr,
\end{equation}
where
\begin{multline}\label{eq:Gdef_exist_rig}
    G(t,x,u):=\phi_t(t,x)+\nabla_x\phi(t,x)\cdot f(x,u) \\ -\lambda\phi(t,x)+h(t,x,u)+\lambda\varepsilon
\end{multline}

\noindent\emph{Step 2: Strict negativity at $(t_0,x_0)$ for the chosen $u^\ast$.}
By \eqref{eq:LES_cond_fixed} and \eqref{eq:u_star_selection_rig},
\begin{multline*}
    G(t_0,x_0,u^\ast)
= \phi_t(t_0,x_0)-\lambda\phi(t_0,x_0)+\lambda\varepsilon \\
+\nabla_x\phi(t_0,x_0)\cdot f(x_0,u^\ast)+h(t_0,x_0,u^\ast) \\
\le \phi_t(t_0,x_0)-\lambda\phi(t_0,x_0)+\lambda\varepsilon \\
+H\!\big(t_0,x_0,\nabla_x\phi(t_0,x_0)\big)+\eta 
\le -\theta+\eta \ \le\ -\tfrac{3}{4}\theta
\end{multline*}
Hence $G(t_0,x_0,u^\ast)\le -\tfrac{3}{4}\theta$.

\noindent\emph{Step 3: Persistence on a neighborhood.}
Since $G$ is continuous in $(t,x)$ for fixed $u^\ast$ (by $\phi\in C^1$ and
continuity of $f,h$ in $t,x$), there exist $\delta_1>0$ and $\rho>0$ such that
\begin{multline}\label{eq:Gneg_neighborhood}
    G(t,x,u^\ast)\le -\tfrac{1}{2}\theta
 \\ \text{for all}\quad t\in[t_0,t_0+\delta_1],\ \|x-x_0\|\le \rho
\end{multline}

\noindent\emph{Step 4: Keep the trajectory inside the neighborhood.}
Using the growth bound $\|f(x,u)\|\le M_f$, we have for $r\in[0,\delta]$,
\[
\|y(r)-x_0\|
\le \int_0^r \|f(y(s),u^\ast)\|ds
\le M_f r.
\]
Choose
\[
\delta_0:=\min\{\delta_1,\ \rho/M_f,\ T-t_0\}.
\]
Then for any $\delta\in(0,\delta_0]$ and all $r\in[0,\delta]$,
$t_0+r\in[t_0,t_0+\delta_1]$ and $\|y(r)-x_0\|\le \rho$, so
\eqref{eq:Gneg_neighborhood} gives
\[
G(t_0+r,y(r),u^\ast)\le -\tfrac{1}{2}\theta,\qquad \forall r\in[0,\delta].
\]

\noindent\emph{Step 5: Integrate to conclude.}
Using \eqref{eq:Fexact_exist_rig},
\[
F(\delta)=\int_{0}^{\delta} e^{-\lambda r}\,G(t_0+r,y(r),u^\ast)\,dr
\le -\tfrac{1}{2}\theta\int_{0}^{\delta} e^{-\lambda r}\,dr
\]
which is exactly \eqref{eq:LES_goal_fixed}.
\end{proof}
\else
A complete proof appears in the archived version of the paper \hl{citearchiveversion}. It is adapted from the proof of Lemma 6 \cite{SolankiHJRL2025}.
\fi


\begin{lemma}\label{lem:local_univ_shift} 
Let $\phi\in C^1([0,T]\times\R^n)$ and suppose
at $(t_0,x_0)$
\begin{multline}\label{eq:local_univ_shift_cond}
\phi_t(t_0,x_0)+H\!\big(t_0,x_0,D\phi(t_0,x_0)\big) \\
-\lambda\,\phi(t_0,x_0)+\lambda\, \varepsilon \;\ge\; \theta,
\quad \theta>0.
\end{multline}
Then, for every measurable control $u(\cdot)$ on $[t_0,t_0+\delta]$,
there exists $\delta_0>0$ such that, for the trajectory $x(\cdot)$ solving
$\dot x=f(x,u(\cdot))$ with $x(t_0)=x_0$, and all $\delta\in(0,\delta_0]$,
\begin{multline}\label{eq:local_univ_shift_goal}
     e^{-\lambda\delta}\,\phi(t_0+\delta,x(\delta))-\phi(t_0,x_0) \\
+\!\int_{0}^{\delta}\! e^{-\lambda r}\,h\!\big(t_0+r,x(r),u(r)\big)\,dr \\
\qquad\qquad\qquad
+\big(1-e^{-\lambda\delta}\big)\,\varepsilon
\;\ge\; \frac{\theta}{2}\!\int_{0}^{\delta} e^{-\lambda r}\,dr .
\end{multline}
\end{lemma}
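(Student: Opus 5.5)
I will mirror the structure of the proof of \Cref{lem:local_exist_shift}, replacing the selection of a near-optimal control by the fact that the Hamiltonian is an infimum, so it bounds from below \emph{every} control value. Given an arbitrary measurable $u(\cdot)$, set $y(r):=x(t_0+r)$ and define
\[
F(\delta):=e^{-\lambda\delta}\phi(t_0+\delta,y(\delta))-\phi(t_0,x_0)+\int_0^\delta e^{-\lambda r}h(t_0+r,y(r),u(r))\,dr+\big(1-e^{-\lambda\delta}\big)\varepsilon .
\]
Exactly as in Step~1 of the previous proof, I differentiate $\Psi(r)=e^{-\lambda r}\phi(t_0+r,y(r))$ almost everywhere. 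I then write $(1-e^{-\lambda\delta})\varepsilon=\int_0^\delta\lambda e^{-\lambda r}\varepsilon\,dr$, which gives the identity $F(\delta)=\int_0^\delta e^{-\lambda r}G(t_0+r,y(r),u(r))\,dr$, with $G$ as in \eqref{eq:Gdef_exist_rig}.

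The key pointwise bound is $G(t,x,u)\ge \phi_t(t,x)+H(t,x,\nabla_x\phi(t,x))-\lambda\phi(t,x)+\lambda\varepsilon=:\Gamma(t,x)$ for every $u\in U$. This holds because $h(t,x,u)+\nabla_x\phi\cdot f(x,u)\ge H(t,x,\nabla_x\phi)$ by the definition of $H$ in \cref{eq:Hamiltonian_t}. By \eqref{eq:local_univ_shift_cond}, $\Gamma(t_0,x_0)\ge\theta$.

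The main obstacle is continuity of $\Gamma$, since $H$ involves an infimum over $U$. I would show that $(t,x,p)\mapsto H(t,x,p)$ is continuous near $(t_0,x_0,\nabla_x\phi(t_0,x_0))$. The standard estimate is
\[
|H(t,x,p)-H(t',x',p')|\le \sup_u|h(t,x,u)-h(t',x',u)|+\|p\|L_f\|x-x'\|+M_f\|p-p'\| .
\]
The $h$ term is controlled by \Cref{ass:A5}(i) in $x$ and by \Cref{ass:A5}(iv) in $t$, uniformly in $u$ on a compact neighborhood. Composing with the continuous maps $\phi_t$, $\nabla_x\phi$, $\phi$ (recall $\phi\in C^1$), I get $\delta_1,\rho>0$ with $\Gamma(t,x)\ge\theta/2$ for $t\in[t_0,t_0+\delta_1]$ and $\|x-x_0\|\le\rho$.

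Crucially, this neighborhood does not depend on $u(\cdot)$. So with $\delta_0:=\min\{\delta_1,\rho/M_f,T-t_0\}$, and using $\|y(r)-x_0\|\le M_f r$ from \Cref{ass:A3}, every trajectory stays in the neighborhood for $r\le\delta_0$. Hence $G(t_0+r,y(r),u(r))\ge\Gamma(t_0+r,y(r))\ge\theta/2$ almost everywhere. Integrating against $e^{-\lambda r}$ yields \eqref{eq:local_univ_shift_goal}, and in fact with $\delta_0$ uniform over all controls, which is stronger than the stated order of quantifiers.
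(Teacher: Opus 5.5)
Your proposal is correct and follows essentially the same route as the paper's proof: the same functional $F(\delta)$ and exact integral representation, the lower bound $G\ge\Gamma$ (your $\Gamma$ is the paper's $m(t,x)=\min_{u\in U}G(t,x,u)$), persistence of $\Gamma\ge\theta/2$ on a control-independent neighborhood, confinement of trajectories via $M_f$, and integration. The only difference is how continuity of $\Gamma$ is justified: you use an explicit modulus estimate for $H$ from \Cref{ass:A2,ass:A3} and \Cref{ass:A5}(i),(iv), whereas the paper invokes joint continuity of $G$ in $(t,x,u)$ together with compactness of $U$; both are valid, and yours does not need continuity in $u$.
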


\ifwithproofs
\begin{proof}
Fix an arbitrary measurable control $u(\cdot)$ on $[t_0,t_0+\delta]$ and let
$x(\cdot)$ be the corresponding trajectory solving
\[
\dot x(s)=f(x(s),u(s)),\qquad x(t_0)=x_0.
\]
For notational convenience, define the shifted-time trajectory
\[
y(r):=x(t_0+r),\qquad r\in[0,\delta],
\]
so that $y(0)=x_0$ and $\dot y(r)=f(y(r),u(t_0+r))$ for a.e.\ $r$.

Define the short-horizon functional
\begin{multline}\label{eq:Fdef_univ_rig}
    F(\delta)
:= e^{-\lambda\delta}\phi(t_0+\delta,y(\delta))-\phi(t_0,x_0) \\
+\int_{0}^{\delta} e^{-\lambda r}\,h(t_0+r,y(r),u(t_0+r))\,dr \\
 +\big(1-e^{-\lambda\delta}\big)\varepsilon
\end{multline}

\noindent\emph{Step 1: Exact integral representation.}
Let
\[
\Psi(r):=e^{-\lambda r}\phi(t_0+r,y(r)),\qquad r\in[0,\delta].
\]
Since $\phi\in C^1$ and $y(\cdot)$ is absolutely continuous, $\Psi$ is
absolutely continuous and for a.e.\ $r\in[0,\delta]$,
\begin{multline*}
    \Psi'(r)
=-\lambda e^{-\lambda r}\phi(t_0+r,y(r)) \\
+e^{-\lambda r}\Big(\phi_t(t_0+r,y(r))+\nabla_x\phi(t_0+r,y(r))\cdot \dot y(r)\Big)\\
=e^{-\lambda r}\Big(\phi_t(t_0+r,y(r))
+\nabla_x\phi(t_0+r,y(r))\cdot f(y(r),u(t_0+r)) \\
-\lambda\phi(t_0+r,y(r))\Big)
\end{multline*}
Integrating from $0$ to $\delta$ gives
\begin{multline}\label{eq:chainint_univ_rig}
    e^{-\lambda\delta}\phi(t_0+\delta,y(\delta))-\phi(t_0,x_0)
= \\ \int_{0}^{\delta} e^{-\lambda r}
\Big(\phi_t+\nabla_x\phi\cdot f-\lambda\phi\Big)(t_0+r,y(r),u(t_0+r))\,dr
\end{multline}
Moreover,
\begin{equation}\label{eq:epsint_univ_rig}
(1-e^{-\lambda\delta})\varepsilon=\int_{0}^{\delta}\lambda e^{-\lambda r}\varepsilon\,dr.
\end{equation}
Substituting \eqref{eq:chainint_univ_rig} and \eqref{eq:epsint_univ_rig} into
\eqref{eq:Fdef_univ_rig} yields the exact identity
\begin{equation}\label{eq:Fexact_univ_rig}
F(\delta)=\int_{0}^{\delta} e^{-\lambda r}\,G(t_0+r,y(r),u(t_0+r))\,dr,
\end{equation}
where
\begin{multline}\label{eq:Gdef_univ_rig}
    G(t,x,u):= \phi_t(t,x)+\nabla_x\phi(t,x)\cdot f(x,u)- \\ \lambda\phi(t,x)+h(t,x,u)+\lambda\varepsilon
\end{multline}

\noindent\emph{Step 2: Pointwise lower bound at $(t_0,x_0)$.}
By definition of the Hamiltonian as an infimum,
\[
h(t_0,x_0,u)+\nabla_x\phi(t_0,x_0)\cdot f(x_0,u)\ \ge\
H\!\big(t_0,x_0,\nabla_x\phi(t_0,x_0)\big),\qquad \forall u\in U.
\]
Adding $\phi_t(t_0,x_0)-\lambda\phi(t_0,x_0)+\lambda\varepsilon$ to both sides and
using \eqref{eq:local_univ_shift_cond} yields
\begin{equation}\label{eq:Gtheta_univ_rig}
G(t_0,x_0,u)\ \ge\ \theta,\qquad \forall u\in U.
\end{equation}
Define
\[
m(t,x):=\min_{u\in U}G(t,x,u).
\]
Then \eqref{eq:Gtheta_univ_rig} implies $m(t_0,x_0)\ge \theta$.

\noindent\emph{Step 3: Uniform persistence on a neighborhood.}
Since $\phi\in C^1$ and $(t,x,u)\mapsto (f(x,u),h(t,x,u))$ is continuous and $U$
is compact, the map $G$ is continuous and thus $m(t,x)=\min_{u\in U}G(t,x,u)$ is
continuous in $(t,x)$. Hence there exist $\delta_1>0$ and $\rho>0$ such that
\begin{equation}\label{eq:mhalf_univ_rig}
m(t,x)\ge \theta/2
\quad\text{for all}\quad t\in[t_0,t_0+\delta_1],\ \|x-x_0\|\le \rho.
\end{equation}
Equivalently, $G(t,x,u)\ge \theta/2$ for all $u\in U$ on this neighborhood.

\noindent\emph{Step 4: Keep the trajectory inside the neighborhood.}
By the uniform growth bound $\|f(x,u)\|\le M_f$,
\begin{multline*}
    \|y(r)-x_0\|\le \int_0^r \|\dot y(s)\|\,ds \\
=\int_0^r \|f(y(s),u(t_0+s))\|\,ds
\le \int_0^r M_f\,ds
= M_f r
\end{multline*}
Choose
\[
\delta_0:=\min\{\delta_1,\ \rho/M_f,\ T-t_0\}.
\]
Then for any $\delta\in(0,\delta_0]$ and all $r\in[0,\delta]$ we have
$t_0+r\in[t_0,t_0+\delta_1]$ and $\|y(r)-x_0\|\le \rho$, so by
\eqref{eq:mhalf_univ_rig},
\[
G(t_0+r,y(r),u(t_0+r))\ge \theta/2,\qquad \forall r\in[0,\delta].
\]

\noindent\emph{Step 5: Integrate to conclude.}
Using \eqref{eq:Fexact_univ_rig},
\begin{multline*}
    F(\delta)=\int_{0}^{\delta} e^{-\lambda r}\,G(t_0+r,y(r),u(t_0+r))\,dr \\
\ge \int_{0}^{\delta} e^{-\lambda r}\,\frac{\theta}{2}\,dr
=\frac{\theta}{2}\int_{0}^{\delta} e^{-\lambda r}\,dr
\end{multline*}
This is exactly \eqref{eq:local_univ_shift_goal}.
\end{proof}
\else
A complete proof appears in the archived version of the paper \hl{citearchiveversion}. It is adapted from the proof of Lemma 7 \cite{SolankiHJRL2025}.
\fi

\begin{theorem}[Viscosity HJB for $V_\varepsilon$]\label{thm:HJB_shifted_visc}
Under the assumptions of \Cref{sec:problem_setup}, the function
\begin{equation}
\begin{aligned}
V_\varepsilon(t,x)
=\inf_{u(\cdot)\in\mathcal M(t)}\int_t^T \omega_t(s)\,h\!\big(s,x^u_{t,x}(s),u(s)\big)\,ds
\;+\;\varepsilon ,
\end{aligned}
\end{equation}
is the unique bounded, uniformly continuous viscosity solution of
\begin{equation}
\label{eq:HJB_shifted}
\begin{aligned}
\Dt V_\varepsilon(t,x)
\;+\; H\!\big(t,x,\Dx V_\varepsilon(t,x)\big)
\;-\; \lambda\,V_\varepsilon(t,x)
\;+\; \lambda\,\varepsilon &= 0,
\\
V_\varepsilon(T,x)&=\varepsilon.
\end{aligned}
\end{equation}
\end{theorem}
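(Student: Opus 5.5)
Two routes are available; I would use the direct one for existence and a shift argument for uniqueness. Boundedness and uniform continuity of $V_\varepsilon$ follow at once from the identity $V_\varepsilon=V_\lambda+\varepsilon$. Indeed, $V_\lambda$ is bounded by $M_h T$ (since $0<\omega_t\le 1$) and is uniformly continuous by the results of \cite{SolankiHJRL2025} under \Cref{ass:A1,ass:A2,ass:A3,ass:A4,ass:A5}. The terminal condition $V_\varepsilon(T,x)=\varepsilon$ is immediate from the definition \cref{eq:Veps_def_add}.

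\emph{Subsolution property.} Let $\phi\in C^1$ be such that $V_\varepsilon-\phi$ attains a local maximum at $(t_0,x_0)$ with $t_0<T$; normalize so that $V_\varepsilon(t_0,x_0)=\phi(t_0,x_0)$, hence $V_\varepsilon\le\phi$ nearby. Suppose the subsolution inequality fails. Then $\phi_t+H-\lambda\phi+\lambda\varepsilon\le-\theta<0$ at $(t_0,x_0)$ for some $\theta>0$. Wait—I need to check the orientation. The terminal-value problem has $\Dt V+H-\lambda V+\lambda\varepsilon=0$, and for this backward equation the subsolution requires $\ge 0$ at maxima. So suppose instead $\phi_t+H-\lambda\phi+\lambda\varepsilon\le-\theta$, and apply \Cref{lem:local_exist_shift}. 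It yields a constant control $u^\ast$ with the short-horizon inequality \cref{eq:LES_goal_fixed}. Insert this control into the DPP of \Cref{thm:DPP_add} with $\sigma=\delta$, and use $V_\varepsilon\le\phi$ along the trajectory, which remains near $x_0$ by \Cref{ass:A3}. This gives
\[
0=\text{DPP}-V_\varepsilon(t_0,x_0)\le e^{-\lambda\delta}\phi(t_0+\delta,x(\delta))-\phi(t_0,x_0)+\int_0^\delta e^{-\lambda r}h\,dr+(1-e^{-\lambda\delta})\varepsilon\le-\tfrac{\theta}{2}\int_0^\delta e^{-\lambda r}dr<0,
\]
which is a contradiction. (I use $\omega_{t_0}(t_0+\delta)=e^{-\lambda\delta}$ and $\omega_{t_0}(t_0+r)=e^{-\lambda r}$.)

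\emph{Supersolution property.} At a local minimum of $V_\varepsilon-\phi$, with $V_\varepsilon\ge\phi$ nearby and equality at $(t_0,x_0)$, suppose $\phi_t+H-\lambda\phi+\lambda\varepsilon\ge\theta>0$. By \Cref{lem:local_univ_shift}, every measurable control makes the bracket in \cref{eq:local_univ_shift_goal} at least $\tfrac{\theta}{2}\int_0^\delta e^{-\lambda r}dr$. The main obstacle here is uniformity: $\delta_0$ must not depend on the control. Inspecting the proof of \Cref{lem:local_univ_shift}, the bound $\delta_0=\min\{\delta_1,\rho/M_f,T-t_0\}$ depends only on $\theta,\phi$ and $M_f$, so it is uniform in $u(\cdot)$. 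Taking the infimum over controls in the DPP and using $V_\varepsilon\ge\phi$ then gives $0\ge\tfrac{\theta}{2}\int_0^\delta e^{-\lambda r}dr>0$, a contradiction. Some care is needed because the DPP infimum need not be attained. However, the lower bound holds for every control with a common constant, so it passes to the infimum.

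\emph{Uniqueness.} If $\tilde V$ is any bounded uniformly continuous viscosity solution of \cref{eq:HJB_shifted}, then $\tilde V-\varepsilon$ is a viscosity solution of \cref{eq:HJB_discounted_TI}. This holds because test functions shift by the same constant, and $-\lambda(\phi+\varepsilon)+\lambda\varepsilon=-\lambda\phi$, with terminal value $0$. By uniqueness for \cref{eq:HJB_discounted_TI} from \cite{SolankiHJRL2025}, $\tilde V-\varepsilon=V_\lambda$, so $\tilde V=V_\varepsilon$. This shift argument also gives existence directly as an alternative check, but the DPP route above is the one I would present, since it uses \Cref{thm:DPP_add} and the two lemmas as the paper intends.
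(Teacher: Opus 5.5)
Your proposal is correct and follows the paper's proof. Both verify the sub- and supersolution inequalities by contradiction, combining \Cref{lem:local_exist_shift} (resp.\ \Cref{lem:local_univ_shift}) with the shifted DPP of \Cref{thm:DPP_add} and the touching inequality $V_\varepsilon\le\phi$ (resp.\ $V_\varepsilon\ge\phi$), with the same orientation of the terminal-value inequalities. Your additions are refinements rather than a different route: you justify boundedness and uniform continuity via $V_\varepsilon=V_\lambda+\varepsilon$, and you make explicit that $\delta_0$ in \Cref{lem:local_univ_shift} is uniform over controls, which the paper's supersolution step needs but the lemma's quantifier order leaves implicit. You also get uniqueness by shifting back to the cited uniqueness for \cref{eq:HJB_discounted_TI}, rather than invoking a generic comparison principle.
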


\ifwithproofs
\begin{proof}
We verify the viscosity sub- and super-solution inequalities; the terminal condition is immediate from the definition. Uniqueness follows from comparison for proper continuous Hamiltonians.

\textbf{(i) Subsolution.}
Let $\phi\in C^1$ and suppose $V_\varepsilon-\phi$ attains a local maximum $0$ at $(t_0,x_0)$.
We must show that
\begin{equation}
\label{eq:sub_goal}
\phi_t(t_0,x_0)+H\!\big(t_0,x_0,D\phi(t_0,x_0)\big)
-\lambda\,\phi(t_0,x_0)+\lambda\,\varepsilon\ \ge\ 0.
\end{equation}
Assume by contradiction that \cref{eq:sub_goal} is not true, then there would exist $\theta>0$ such that
\begin{equation}
\label{eq:sub_contra}
\phi_t(t_0, x_0)+H(t_0,x_0,D\phi(t_0, x_0))-\lambda\phi(t_0, x_0)+\lambda\varepsilon \ \le\ -\theta.
\end{equation}
Applying \Cref{lem:local_exist_shift}, there exists a control
$u^\ast$ and $\delta_0>0$ such that, for $\delta\in(0,\delta_0]$ and the corresponding trajectory $x(\cdot)$ satisfies,
\begin{equation}
\begin{aligned}
\label{eq:exist_ineq_use}
&e^{-\lambda\delta}\phi(t_0+\delta,x(\delta))-\phi(t_0,x_0)
\; \\
&+\;\int_{0}^{\delta}\! e^{-\lambda r}\,h(t_0+r,x(r),u^\ast)\,dr\;+\;\big(1-e^{-\lambda\delta}\big)\varepsilon \\
&\ \le\ -\frac{\theta}{2}\!\int_0^\delta e^{-\lambda r}\,dr. 
\end{aligned}    
\end{equation}

Based on the premise that $V_\varepsilon-\phi$ attains a local maximum $0$ at $(t_0,x_0)$, this means that $\phi = V_\varepsilon$ at $(t_0,x_0)$, and $\phi\ge V_\varepsilon$ for values different from $(t_0,x_0)$. This means that we have:
\begin{multline}
\label{eq:touch_above_use}
e^{-\lambda\delta}\,V_\varepsilon(t_0+\delta,x(\delta)) - V_\varepsilon(t_0,x_0)
\\ \le\ e^{-\lambda\delta}\phi(t_0+\delta,x(\delta))-\phi(t_0,x_0).
\end{multline}
Recall that the DPP with offset (short-step form at $(t_0,x_0)$) based on \cref{eq:DPP_add} means that
\begin{equation}
\label{eq:DPP_short}
\begin{aligned}
V_\varepsilon(t_0,x_0)
&=\inf_{u(\cdot)}\Big\{\int_{0}^{\delta}\! e^{-\lambda r}\,h(t_0+r,x(r),u(r))\,dr
 \\ & + e^{-\lambda\delta} V_\varepsilon(t_0+\delta,x(\delta))
+\ \big(1-e^{-\lambda\delta}\big)\varepsilon\Big\} .
\end{aligned}
\end{equation}

Adding $\int_0^\delta e^{-\lambda r} h(\cdot) d r+\left(1-e^{-\lambda \delta}\right) \varepsilon$ to both sides of \cref{eq:touch_above_use} yields
\begin{equation}
\begin{aligned}
&e^{-\lambda\delta}\,V_\varepsilon(t_0+\delta,x(\delta)) - V_\varepsilon(t_0,x_0) + \\&\int_{0}^{\delta}\! e^{-\lambda r}\,h(t_0+r,x(r),u(r))\,dr + \big(1-e^{-\lambda\delta}\big)\varepsilon \\
&\le\ e^{-\lambda\delta}\phi(t_0+\delta,x(\delta))-\phi(t_0,x_0) + \\
&\int_{0}^{\delta}\! e^{-\lambda r}\,h(t_0+r,x(r),u(r))\,dr + \big(1-e^{-\lambda\delta}\big)\varepsilon .
\end{aligned}
\end{equation}
This means that
\begin{equation}
\begin{aligned}
    &e^{-\lambda \delta} V_{\varepsilon}\left(t_0+\delta, x(\delta)\right)-V_{\varepsilon}\left(t_0, x_0\right)\\
    &+\int_0^\delta e^{-\lambda r} h(\cdot) d r+\left(1-e^{-\lambda \delta}\right) \varepsilon \leq -\frac{\theta}{2} \int_0^\delta e^{-\lambda r} d r .
\end{aligned}
\end{equation}

Taking the infimum over $u(\cdot)$, we get
\begin{equation}
\begin{aligned}
&\inf_{u(\cdot)}\Big\{\int_{0}^{\delta}\! e^{-\lambda r}\,h(t_0+r,x(r),u(r))\,dr
 \\ & + e^{-\lambda\delta} V_\varepsilon(t_0+\delta,x(\delta))
+\ \big(1-e^{-\lambda\delta}\big)\varepsilon\Big\} \\
&- V_\varepsilon(t_0,x_0) \leq -\frac{\theta}{2} \int_0^\delta e^{-\lambda r} d r .
\end{aligned}
\label{eq:inf_less_than_theta}
\end{equation}

Since, based on \cref{eq:DPP_short}, the first part of the \cref{eq:inf_less_than_theta} is $V_\varepsilon(t_0,x_0)$, we obtain the contradiction
$0 \le -\tfrac{\theta}{2}\int_0^\delta e^{-\lambda r}\,dr<0$ for small $\delta$.
Hence, \cref{eq:sub_goal} holds.

\textbf{(ii) Supersolution.}
Let $\phi\in C^1$ and suppose $V_\varepsilon-\phi$ attains a local minimum $0$ at $(t_0,x_0)$.
We must show
\begin{multline}
\label{eq:super_goal}
\phi_t(t_0,x_0)+H\!\big(t_0,x_0,D\phi(t_0,x_0)\big)
-\lambda\,\phi(t_0,x_0)+\lambda\,\varepsilon\ \\ \le\ 0.
\end{multline}
Assume by contradiction that \cref{eq:super_goal} is not true, then there would exist $\theta>0$ with 
\begin{equation}
\label{eq:super_contra}
\phi_t(t_0, x_0)+H(t_0,x_0,D\phi(t_0, x_0))-\lambda\phi(t_0, x_0)+\lambda\varepsilon \ \ge\ \theta.
\end{equation}
Applying \Cref{lem:local_univ_shift} with this $\theta$ to any measurable control $u(\cdot)$ on $[t_0,t_0+\delta]$ and the corresponding trajectory $x(\cdot)$ and for small $\delta>0$,
\begin{multline}\label{eq:univ_ineq_use}
e^{-\lambda\delta}\phi(t_0+\delta,x(\delta))-\phi(t_0,x_0)
\;+\; \\ \int_{0}^{\delta}\! e^{-\lambda r}\,h(t_0+r,x(r),u(r))\,dr
\\[-2pt]
\;+\;\big(1-e^{-\lambda\delta}\big)\varepsilon
\ \ge\ \frac{\theta}{2}\!\int_0^\delta e^{-\lambda r}\,dr .
\end{multline}

Since now $\phi\le V_\varepsilon$ near $(t_0,x_0)$ with equality at $(t_0,x_0)$,
\begin{equation}
\begin{aligned}
\label{eq:touch_below_use}
&e^{-\lambda\delta}\,V_\varepsilon(t_0+\delta,x(\delta)) - V_\varepsilon(t_0,x_0)\\
&\ \ge\ e^{-\lambda\delta}\phi(t_0+\delta,x(\delta))-\phi(t_0,x_0).
\end{aligned}
\end{equation}
Following the steps of the proof for the subsolution, we make use of \cref{eq:exist_ineq_use} and \cref{eq:touch_below_use}, and then use the DPP \cref{eq:DPP_short} with the any $u$ to reach $0 < \tfrac{\theta}{2}\int_0^\delta e^{-\lambda r}\,dr\leq 0$ for small $\delta$. This is a contradiction. Thus, \cref{eq:super_goal} holds.

\textbf{(iii) Boundary condition.}
By definition $V_\varepsilon(T,x)=\varepsilon$.
\end{proof}
\else
A complete proof appears in the archived version of the paper \hl{citearchiveversion}. It is adapted from the proof of Theorem 2 \cite{SolankiHJRL2025}.
\fi

\subsection{Forward/initial time formulation and corresponding HJB}
\label{subsec:ttg_statement}

Define the time-to-go variable \(\tau:=T-t\) and
\begin{equation}
\label{eq:W_def}
W_\varepsilon(\tau,x)\;:=\;V_\varepsilon(T-\tau,x),\qquad \tau\in[0,T].
\end{equation}
Equivalently, using the change of variables \(r:=s-(T-\tau)\),
\begin{equation}
\label{eq:W_cost}
W_\varepsilon(\tau,x)
\;=\; \inf_{u(\cdot)} \int_{0}^{\tau} e^{-\lambda r}\,
h\!\big(T-\tau+r,\;x^{u}(r),\;u(r)\big)\,dr \;+\; \varepsilon,
\end{equation}
with \(x^{u}(0)=x\) and \(\dot x=f(x,u)\).
Introduce the time-reversed Hamiltonian
\begin{equation}
\label{eq:tilde_H}
\tilde{H}(\tau,x,p)\;:=\;H(T-\tau,x,p).
\end{equation}
The \(W_\varepsilon\) is the viscosity solution of HJB  
\begin{equation}
\label{eq:HJB_ttg}
\begin{aligned}
\partial_\tau W_\varepsilon(\tau,x)
\;&=\;\tilde H\!\big(\tau,x,\nabla_x W_\varepsilon(\tau,x)\big)
\;-\;\lambda\,W_\varepsilon(\tau,x)\;+\;\lambda\,\varepsilon,\\
W_\varepsilon(0,x)&=\varepsilon.
\end{aligned}
\end{equation}

\section{RL counterpart and $\varepsilon$-residual implications}
\label{sec:rl_counterpart_eps}

Building on the forward HJB formulation in \cref{eq:HJB_ttg}, we now pass to the discrete-time Bellman view that enables certification. We introduce the one-step operator $T_{\sigma,\lambda}$ in \cref{eq:Bellman_T_discrete_clean}. For a discount rate $\lambda>0$ and discount factor $\gamma=e^{-\lambda\sigma}$, the operator is a strict contraction and thus admits a unique fixed-point $W_\lambda$ (\cref{eq:W_fixed_point_clean}). This is based on the Banach fixed-point theorem \cite{van2022functional}.

Consequently, any certified bound on the Bellman residual yields a uniform (sup-norm) $L_\infty$ value-function error bound between the learned approximation $\widehat W$ and the true value $W_{\lambda}$, characterized by \cref{eq:value_error_bound_clean} in \Cref{lem:contr_value_err}. We denote this bound by $\varepsilon_{\mathrm{val}}$.

Interpreting $\widehat W$ in a viscosity sense and leveraging $\varepsilon_{\mathrm{val}}$ yields the shifted forward HJB that characterize each of $\widehat W \pm \varepsilon_{\mathrm{val}}$  (\Cref{thm:HJB_shifted_visc}) in the viscosity sense. Finally, under the strict reachability sign semantics assumptions, thresholding $\widehat W$ at $\pm \varepsilon_{\mathrm{val}}$ produces calibrated inner/outer enclosures of the backward-reachable tube, which is proved in \Cref{thm:reach_bracket}.

Specifically, \Cref{subsec:discrete-time-ttg} defines $T_{\sigma,\lambda}$ and the TTG fixed-point identity; \Cref{subsec:bellman-operator-contraction} shows contraction implies residual-to-value error bounds; \Cref{thm:reach_bracket} converts that bound into shifted forward HJB envelopes and reachable-tube enclosures.

\subsection{Discrete-time Bellman Operator}
\label{subsec:discrete-time-ttg}
We work with the initial-time value function $W_\lambda(\tau,x)$, defined in \cref{eq:HJB_forward_tau} cf.\ \Cref{subsec:time_reversed_H}.

Fix a timestep $\sigma\in(0,T]$ and set
$\gamma:=e^{-\lambda\sigma}\in(0,1)$. Let
$\mathcal A_\sigma:=\{\,a:[0,\sigma]\!\to\!U\ \text{measurable}\,\}$ and, for state trajectory $\frac{dy}{dt}(r)=f(y(r),a(r))$, $y(0)=x$, define the one-step discounted cost
\begin{equation}
\label{eq:one_step_cost}
c(\tau,x,a):=\int_{0}^{\sigma} e^{-\lambda r}\,h\!\big(T-\tau+r,\,y(r),\,a(r)\big)\,dr.
\end{equation}
We define the Bellman operator on bounded $\Psi:[0,T]\times\R^n\to\R$ as
\begin{equation}
\label{eq:Bellman_T_discrete_clean}
(T_{\sigma,\lambda}\Psi)(\tau,x)
:= \inf_{a\in\mathcal A_\sigma}\Big\{\,c(\tau,x,a)+\gamma\,\Psi\big(\tau-\sigma,\,y(\sigma)\big)\Big\}.
\end{equation}
Applying the DPP yields the fixed-point identity on $[\sigma,T]\times\R^n$:
\begin{equation}
\label{eq:W_fixed_point_clean}
W_\lambda \;=\; T_{\sigma,\lambda} W_\lambda .
\end{equation}
(See Sec. 4.1 of \cite{SolankiHJRL2025}).

\subsection{Bellman Operator Value-error and HJB PDE slack Bound Guarantees}
\label{subsec:bellman-operator-contraction}

\todo{Check the correctness of the region of validity.}

\Cref{lem:contr_value_err} allows us to establish the global worst-case bound on the learned value-function error over the region of interest, directly induced by a Bellman residual and \cref{thm:slack_implies_bracket_offset_roi} enables us to use the HJB PDE slack to bound the worst-case bound on the learned value-function.

\begin{lemma}[Contraction and value error]\label{lem:contr_value_err}
If $\lambda>0$, then $T_{\sigma,\lambda}$ is a $\gamma$-contraction in $\|\cdot\|_\infty$ on $[\sigma,T]\times\R^n$:
\begin{equation}
\begin{aligned}
\label{eq:gamma-contraction}
\|T_{\sigma,\lambda}\Psi_1 - T_{\sigma,\lambda}\Psi_2\|_\infty
\;&\le\; \gamma\,\|\Psi_1-\Psi_2\|_\infty,\\
\gamma&=e^{-\lambda\sigma}\in(0,1).
\end{aligned}
\end{equation}
Consequently, if the Bellman residual of $\widehat W$ is upper bounded
\begin{equation}\label{eq:residual_bound_clean}
\big\|T_{\sigma,\lambda}\widehat W-\widehat W\big\|_\infty\le \varsigma,
\end{equation}
where $\varsigma$ is a certified residual bound (determined, for example, through an SMT solver) then, 
\begin{equation}
\label{eq:value_error_bound_clean}
\|\widehat W - W_\lambda\|_\infty
\;\le\; \frac{\varsigma}{1-\gamma}
\;=\; \frac{\varsigma}{1-e^{-\lambda\sigma}}
\;=:\; \varepsilon_{\mathrm{val}}.
\end{equation}

\end{lemma}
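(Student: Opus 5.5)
The proof has two parts: first the contraction estimate \cref{eq:gamma-contraction}, then the standard residual-to-error argument anchored at the fixed point \cref{eq:W_fixed_point_clean}.

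\textbf{Contraction.} Fix $(\tau,x)\in[\sigma,T]\times\R^n$ and bounded $\Psi_1,\Psi_2$. For each admissible $a\in\mathcal A_\sigma$, the trajectory $y(\cdot)$ and the one-step cost $c(\tau,x,a)$ do not depend on $\Psi$. Hence
\[
c+\gamma\Psi_1(\tau-\sigma,y(\sigma))\le c+\gamma\Psi_2(\tau-\sigma,y(\sigma))+\gamma\|\Psi_1-\Psi_2\|_\infty .
\]
Take the infimum over $a$ on both sides. The term $\gamma\|\Psi_1-\Psi_2\|_\infty$ is a constant, so it passes through the infimum, exactly as the constant $\varepsilon$ did in the proof of \Cref{thm:DPP_add}. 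This gives $(T_{\sigma,\lambda}\Psi_1)(\tau,x)\le (T_{\sigma,\lambda}\Psi_2)(\tau,x)+\gamma\|\Psi_1-\Psi_2\|_\infty$. Swapping the roles of $\Psi_1$ and $\Psi_2$ and taking the supremum over $(\tau,x)$ yields \cref{eq:gamma-contraction}. The infima are finite because $h$ is bounded (\Cref{ass:A5}(ii)), so $|c|\le M_h\sigma$ and the $\Psi_i$ are bounded.

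\textbf{Value error.} Insert the fixed point $W_\lambda=T_{\sigma,\lambda}W_\lambda$ and use the triangle inequality:
\[
\|\widehat W-W_\lambda\|_\infty\le \|\widehat W-T_{\sigma,\lambda}\widehat W\|_\infty+\|T_{\sigma,\lambda}\widehat W-T_{\sigma,\lambda}W_\lambda\|_\infty\le \varsigma+\gamma\|\widehat W-W_\lambda\|_\infty .
\]
Since $\widehat W$ and $W_\lambda$ are bounded ($|W_\lambda|\le M_h T$), the quantity $\|\widehat W-W_\lambda\|_\infty$ is finite. Rearranging with $1-\gamma>0$ then gives \cref{eq:value_error_bound_clean}.

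\textbf{Main obstacle: the domain.} The operator evaluates $\Psi$ at $\tau-\sigma$, so $T_{\sigma,\lambda}\Psi$ is only defined on $[\sigma,T]$, while $\Psi$ lives on $[0,T]$. A sup-norm contraction that is self-mapping on a single domain therefore needs care. I would handle this in one of two ways.
\begin{itemize}
\item \emph{Continue $T_{\sigma,\lambda}$ on the initial strip.} On $\tau\in[0,\sigma)$ set $T_{\sigma,\lambda}\Psi$ equal to the known initial data $W_\lambda(0,\cdot)=0$, or to the partial-horizon cost. The two operator values then agree there, so that strip contributes zero to $\|T_{\sigma,\lambda}\Psi_1-T_{\sigma,\lambda}\Psi_2\|_\infty$.
\item \emph{Enforce the data on $\widehat W$.} Require $\widehat W=W_\lambda$ on $[0,\sigma)$, i.e.\ impose the initial data there. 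Then both the residual and the error are measured consistently in the sup norm over $[0,T]\times\R^n$.
\end{itemize}
I expect the writeup to state this convention explicitly (this is the ``region of validity'' flagged in the todo). Everything else is routine.
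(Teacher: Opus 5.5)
Your value-error step matches the paper's argument: add and subtract $T_{\sigma,\lambda}\widehat W$, substitute $W_\lambda=T_{\sigma,\lambda}W_\lambda$, apply the triangle inequality, the contraction and the residual bound, and rearrange. The differences lie in what surrounds it. The paper never proves \cref{eq:gamma-contraction}; it only asserts it, with an appeal to the Banach fixed-point theorem. It also never checks $\|\widehat W-W_\lambda\|_\infty<\infty$ before rearranging, and it treats all sup norms as taken over one common domain. Your inf-comparison proof of the contraction and your boundedness check close the first two omissions correctly.

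Your domain remark is not a matter of convention: it exposes a genuine gap in the statement itself. Take $\widehat W:=W_\lambda+\kappa$ on $[0,\sigma)\times\R^n$ with $\kappa\neq 0$, and define $\widehat W:=T_{\sigma,\lambda}\widehat W$ strip by strip. Then $\widehat W=W_\lambda+\gamma^k\kappa$ on $[k\sigma,(k+1)\sigma)$. The residual vanishes on $[\sigma,T]\times\R^n$, but the error equals $\gamma\kappa\neq 0$ already at $\tau=\sigma$. Some hypothesis on the initial strip is therefore necessary. If the error there is at most $e_0$, your argument yields $\|\widehat W-W_\lambda\|_\infty\le\max\{e_0,\varsigma/(1-\gamma)\}$.

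One correction to your first option. Setting $T_{\sigma,\lambda}\Psi:=0$ on $[0,\sigma)$ breaks the fixed-point identity, because $W_\lambda(\tau,\cdot)$ is generally nonzero for $\tau\in(0,\sigma)$ (it is negative on $\mathcal T$ under \Cref{ass:S0,ass:S2}). Only the partial-horizon continuation is valid, since it makes $T_{\sigma,\lambda}\Psi=W_\lambda$ on that strip. That variant reduces to the hypothesis $e_0\le\varsigma$, and your second option is the special case $e_0=0$.
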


\begin{proof}
    Starting with $\|\widehat W - W_\lambda\|_\infty$ and adding and subtracting $T_{\sigma,\lambda} \widehat W$ yields
    \begin{equation}
        \|\widehat W - W_\lambda\|_\infty = \|\widehat W + T_{\sigma,\lambda} \widehat W - T_{\sigma,\lambda} \widehat W - W_\lambda\|_\infty .
    \end{equation}
    From the fixed-point identity, shown in \cref{eq:W_fixed_point_clean}, we can replace $W_\lambda$ with $T_{\sigma,\lambda} W_\lambda$:
    \begin{equation}
        \begin{aligned}
        \|\widehat W - W_\lambda\|_\infty &= \|\widehat W + T_{\sigma,\lambda} \widehat W - T_{\sigma,\lambda} \widehat W - T_{\sigma,\lambda} W_\lambda\|_\infty\\
        &= \|(\widehat W - T_{\sigma,\lambda} \widehat W) + (T_{\sigma,\lambda} \widehat W - T_{\sigma,\lambda} W_\lambda)\|_{\infty} .
        \end{aligned}
    \end{equation}
    Based on the triangle inequality, 
    \begin{equation}
        \begin{aligned}
        &\|(\widehat W - T_{\sigma,\lambda} \widehat W) + (T_{\sigma,\lambda} \widehat W - T_{\sigma,\lambda} W_\lambda)\|_{\infty} \\
        &\leq \|(\widehat W - T_{\sigma,\lambda} \widehat W)\|_\infty + \|(T_{\sigma,\lambda} \widehat W - T_{\sigma,\lambda} W_\lambda)\|_{\infty} .
        \end{aligned}
    \end{equation}
    Leveraging \cref{eq:gamma-contraction} and \cref{eq:residual_bound_clean}, 
    \begin{equation}
        \|\widehat W - W_\lambda\|_\infty
        \leq \varsigma + \gamma \|\widehat W - W_\lambda\|_{\infty} .
    \end{equation}
    Rearranging and simplifying yields \cref{eq:value_error_bound_clean}.
\end{proof}


\begin{theorem}[HJB-slack implies uniform enclosure of $W_\lambda$ on an ROI]
\label{thm:slack_implies_bracket_offset_roi}
Let $\lambda>0$ and $\tilde H(\tau,x,p):=H(T-\tau,x,p)$. Fix a compact ROI $X\subset\R^n$ and set
$\mathcal D:=[0,T]\times X$. Let $W_\lambda$ denote the unique bounded, uniformly continuous viscosity solution of
\begin{multline}\label{eq:forward_HJB_base_slack_offset_roi}
\partial_\tau W_\lambda(\tau,x)
=\tilde H\!\big(\tau,x,\nabla_x W_\lambda(\tau,x)\big)-\lambda\,W_\lambda(\tau,x),
\qquad \\
W_\lambda(0,x)=0.
\end{multline}
Let $\widehat W:\mathcal D\to\R$ be bounded and uniformly continuous. Assume there exist
$\varepsilon_{\mathrm{pde}}\ge 0$ and $\varepsilon_0\ge 0$ such that:

\begin{enumerate}
\item[\textup{(i)}] (\emph{$\varepsilon_{\mathrm{pde}}$-HJB slack on $(0,T]$ in the viscosity sense})
For every $\phi\in C^1([0,T]\times\R^n)$ and every $(\tau_0,x_0)\in(0,T]\times X$,
\begin{multline}\label{eq:pde_slack_sub_offset_roi}
\text{if $\widehat W-\phi$ has a local maximum at $(\tau_0,x_0)$, then}\\
\partial_\tau\phi(\tau_0,x_0)-\tilde H(\tau_0,x_0,\nabla_x\phi(\tau_0,x_0))+\lambda\,\phi(\tau_0,x_0)
\ \\ \le\ \varepsilon_{\mathrm{pde}},
\end{multline}
\begin{multline}\label{eq:pde_slack_sup_offset_roi}
\text{if $\widehat W-\phi$ has a local minimum at $(\tau_0,x_0)$, then}\\
\partial_\tau\phi(\tau_0,x_0)-\tilde H(\tau_0,x_0,\nabla_x\phi(\tau_0,x_0))+\lambda\,\phi(\tau_0,x_0)
\ \\ \ge\ -\varepsilon_{\mathrm{pde}}.
\end{multline}

\item[\textup{(ii)}] (\emph{initial mismatch bound on $X$})
\begin{equation}\label{eq:init_mismatch_offset_roi}
\sup_{x\in X}\big|\widehat W(0,x)\big|\le \varepsilon_0.
\end{equation}
\end{enumerate}

Define
\begin{equation}\label{eq:def_eps_val_offset_roi}
\varepsilon_{\mathrm{val}}
:=\max\Big\{\frac{\varepsilon_{\mathrm{pde}}}{\lambda},\ \varepsilon_0\Big\},
\end{equation}
and the envelopes on $\mathcal D$,
\begin{equation}\label{eq:env_def_offset_roi}
\underline W:=\widehat W-\varepsilon_{\mathrm{val}},
\qquad
\overline W:=\widehat W+\varepsilon_{\mathrm{val}}.
\end{equation}
Then, on $\mathcal D$,
\begin{equation}\label{eq:bracket_conclusion_offset_roi}
\underline W(\tau,x)\ \le\ W_\lambda(\tau,x)\ \le\ \overline W(\tau,x),
\qquad \forall(\tau,x)\in\mathcal D,
\end{equation}
and in particular
\begin{equation}\label{eq:supnorm_bound_from_slack_offset_roi}
\|\widehat W-W_\lambda\|_{L^\infty(\mathcal D)} \le \varepsilon_{\mathrm{val}}.
\end{equation}
\end{theorem}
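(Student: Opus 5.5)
The plan is to show that the two envelopes in \cref{eq:env_def_offset_roi} are, respectively, a viscosity sub- and supersolution of the forward equation \cref{eq:forward_HJB_base_slack_offset_roi}. We then sandwich $W_\lambda$ between them using the comparison principle for this proper Hamiltonian, which the paper already invokes for uniqueness in \Cref{thm:HJB_shifted_visc}. This is the forward-time converse of the additive-offset identity of \Cref{sec:hjb_add_eps}. Shifting by a constant $c$ changes the operator
\[
\mathcal F[\phi]:=\partial_\tau\phi-\tilde H(\tau,x,\nabla_x\phi)+\lambda\phi
\]
exactly by $\lambda c$, because $\partial_\tau$ and $\nabla_x$ annihilate constants. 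This is the same mechanism that produced the $\lambda\varepsilon$ term in \cref{eq:HJB_ttg}.

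\textbf{Step 1 (subsolution).} Let $\phi\in C^1$ be such that $\underline W-\phi$ has a local maximum at $(\tau_0,x_0)\in(0,T]\times X$. Then $\widehat W-(\phi+\varepsilon_{\mathrm{val}})$ has a local maximum at the same point. Applying \cref{eq:pde_slack_sub_offset_roi} to the test function $\phi+\varepsilon_{\mathrm{val}}$ gives $\mathcal F[\phi](\tau_0,x_0)+\lambda\varepsilon_{\mathrm{val}}\le\varepsilon_{\mathrm{pde}}$. Since $\lambda\varepsilon_{\mathrm{val}}\ge\varepsilon_{\mathrm{pde}}$ by \cref{eq:def_eps_val_offset_roi}, we get $\mathcal F[\phi](\tau_0,x_0)\le 0$.

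\textbf{Step 2 (supersolution).} Symmetrically, for $\overline W$ we apply \cref{eq:pde_slack_sup_offset_roi} to $\phi-\varepsilon_{\mathrm{val}}$. This gives $\mathcal F[\phi]-\lambda\varepsilon_{\mathrm{val}}\ge-\varepsilon_{\mathrm{pde}}$, hence $\mathcal F[\phi]\ge 0$.

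\textbf{Step 3 (initial data).} By \cref{eq:init_mismatch_offset_roi} and $\varepsilon_{\mathrm{val}}\ge\varepsilon_0$,
\[
\underline W(0,x)\le\varepsilon_0-\varepsilon_{\mathrm{val}}\le 0=W_\lambda(0,x)\le\overline W(0,x).
\]
Both envelopes inherit boundedness and uniform continuity from $\widehat W$.

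\textbf{Step 4 (comparison).} We apply the comparison principle for $\partial_\tau w-\tilde H+\lambda w=0$ twice: to the pair $(\underline W,W_\lambda)$ and to the pair $(W_\lambda,\overline W)$. The ingredients are that $\lambda>0$ makes the operator proper, and that $\tilde H$ is Lipschitz in $p$ (constant $M_f$) and uniformly continuous in $(\tau,x)$ on compacts (from \Cref{ass:A2,ass:A3,ass:A5}). This yields \cref{eq:bracket_conclusion_offset_roi}, and \cref{eq:supnorm_bound_from_slack_offset_roi} follows immediately.

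\textbf{Main obstacle.} The difficulty is Step 4 on the bounded cylinder $\mathcal D=[0,T]\times X$ rather than on $[0,T]\times\R^n$. The hypotheses give no lateral boundary inequality on $\partial X$, and comparison on a bounded domain normally needs one. I would try to resolve this with finite speed of propagation: since $\|f\|\le M_f$, the standard doubling-of-variables proof with a localized penalization (a cone-shaped cutoff shrinking at speed $M_f$) gives the ordering at every $(\tau,x)$ whose backward cone $\{(s,y):\|y-x\|\le M_f(\tau-s)\}$ lies inside $\mathcal D$. To obtain the bracket on all of $\mathcal D$, I would then either require the slack hypotheses (i)--(ii) on an enlarged set $X_T:=X+\overline{B}(0,M_fT)$, or take $X$ to be invariant under the dynamics. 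In the latter case the value on $X$ depends only on data in $X$, and the doubling argument closes without boundary terms. I expect this localization, and pinning down the exact region where the bound is valid, to be where the real work lies. The algebra in Steps 1--3 is routine.
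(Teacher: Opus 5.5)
\textbf{Verdict: your proposal follows the paper's approach, and the obstacle you flag in Step~4 is a genuine gap---in the paper's proof too. As written, the statement cannot be proved.}

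\textbf{Steps 1--3 match the paper.} The paper shifts first by $c_{\mathrm{pde}}=\varepsilon_{\mathrm{pde}}/\lambda$ and then by the remainder $\varepsilon_{\mathrm{val}}-c_{\mathrm{pde}}\ge 0$. This is the same as your single shift using $\lambda\varepsilon_{\mathrm{val}}\ge\varepsilon_{\mathrm{pde}}$.

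\textbf{Step 4 cannot be closed, because the statement is false.} For Step~4 the paper simply cites ``the comparison principle for proper, continuous HJB operators'' on $(0,T]\times X$, with ordering only on $\{0\}\times X$. Read (i) as intended: test functions touch $\widehat W$, i.e.\ $\phi(\tau_0,x_0)=\widehat W(\tau_0,x_0)$ as in \Cref{thm:HJB_shifted_visc}, and extrema are taken in a full neighbourhood. This is exactly what a pointwise certificate $|R_{\mathrm{HJB}}(\widehat W)|\le\varepsilon_{\mathrm{pde}}$ provides. Now take:
\begin{itemize}
\item $n=1$, $f\equiv 1$ (any compact $U$), $X=[0,1]$, $T=2$, $\lambda=1$;
\item $h(t,x,u)=g(x)$ with $g(x)=-\min\{1,\max\{0,x-3/2\}\}$. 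This even satisfies the target assumptions with $\mathcal T=(3/2,\infty)$.
\end{itemize}
Then $W_\lambda(\tau,x)=\int_0^\tau e^{-r}g(x+r)\,dr$, so $W_\lambda(2,1)<0$. But $\widehat W\equiv 0$ has residual $-g=0$ on $X$ and zero initial mismatch. Hence $\varepsilon_{\mathrm{val}}=0$, and the conclusion would force $W_\lambda\equiv 0$ on $\mathcal D$. The hypotheses never see the cost outside $X$, yet $W_\lambda$ on $\mathcal D$ depends on it.

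\textbf{Reading the extrema relative to $\mathcal D$ does not rescue the statement.} Suppose extrema at $x_0\in\partial X$ are taken relative to $\mathcal D$. For convex $X$ with outward normal $\nu$ at $x_0$, replacing $\phi$ by $\phi+k\,\nu\cdot(x_0-x)$ preserves touching from above. Letting $k\to\infty$ in the subsolution inequality then forces $\nu\cdot f(x_0,u)\le 0$ for all $u\in U$, wherever $\widehat W$ can be touched (everywhere, for a smooth network). So that reading silently assumes forward invariance of $X$. That fails, e.g., for the box ROI of the double integrator, where $\nu\cdot f=x_2>0$ on part of $\{x_1=2.5\}$.

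\textbf{So you must change the hypotheses, as you propose.}
\begin{itemize}
\item \emph{Option (a) is the clean repair.} Impose (i)--(ii) on $[0,T]\times X_T$ with $X_T=X+\overline B(0,M_fT)$. Equivalently, keep the hypotheses on $X$ and restrict the conclusion to $\{(\tau,x)\in\mathcal D:\overline B(x,M_f\tau)\subseteq X\}$. This works because $\tilde H$ is $M_f$-Lipschitz in $p$, so the standard domain-of-dependence comparison on backward cones applies.
\item \emph{Option (b) is viable, but not ``without boundary terms''.} If (i) holds only in the interior of $X$, the maximum in the doubling argument can sit on $\partial X$. You then need a genuine state-constraint (Soner-type) boundary argument. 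The observation that $W_\lambda$ on $\mathcal D$ depends only on data in $X$ is not enough by itself.
\end{itemize}
Either way, what you would prove is a corrected theorem, not the one stated.
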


\begin{proof}
Introduce the proper HJB operator
\[
F(\tau,x,r,p,q):=q-\tilde H(\tau,x,p)+\lambda r,
\]
so that \eqref{eq:forward_HJB_base_slack_offset_roi} is $F(\tau,x,W_\lambda,\nabla W_\lambda,\partial_\tau W_\lambda)=0$.

\medskip
\noindent\textbf{Step 1 (Adjust the PDE slack by a constant shift).}
Let $c_{\mathrm{pde}}:=\varepsilon_{\mathrm{pde}}/\lambda$ and define
\[
U:=\widehat W-c_{\mathrm{pde}},
\qquad
V:=\widehat W+c_{\mathrm{pde}}.
\]

\smallskip
\noindent\emph{Claim 1: $U$ is a viscosity subsolution of $F\le 0$ on $(0,T]\times X$.}
Let $\psi\in C^1$ and suppose $U-\psi$ has a local maximum at $(\tau_0,x_0)\in(0,T]\times X$.
Then $\widehat W-(\psi+c_{\mathrm{pde}})$ has a local maximum at the same point since
$U-\psi=\widehat W-(\psi+c_{\mathrm{pde}})$.
Apply \eqref{eq:pde_slack_sub_offset_roi} with $\phi:=\psi+c_{\mathrm{pde}}$ to obtain
\[
F(\tau_0,x_0,\psi(\tau_0,x_0)+c_{\mathrm{pde}},\nabla\psi(\tau_0,x_0),\partial_\tau\psi(\tau_0,x_0))
\le \varepsilon_{\mathrm{pde}}.
\]
Because $F$ is affine in $r$ with slope $\lambda$ and $c_{\mathrm{pde}}=\varepsilon_{\mathrm{pde}}/\lambda$, this rearranges to
\[
F(\tau_0,x_0,\psi(\tau_0,x_0),\nabla\psi(\tau_0,x_0),\partial_\tau\psi(\tau_0,x_0))\le 0.
\]
Thus $U$ is a viscosity subsolution of $F\le 0$.

\smallskip
\noindent\emph{Claim 2: $V$ is a viscosity supersolution of $F\ge 0$ on $(0,T]\times X$.}
Let $\psi\in C^1$ and suppose $V-\psi$ has a local minimum at $(\tau_0,x_0)\in(0,T]\times X$.
Then $\widehat W-(\psi-c_{\mathrm{pde}})$ has a local minimum at the same point since
$V-\psi=\widehat W-(\psi-c_{\mathrm{pde}})$.
Apply \eqref{eq:pde_slack_sup_offset_roi} with $\phi:=\psi-c_{\mathrm{pde}}$ to obtain
\[
F(\tau_0,x_0,\psi(\tau_0,x_0),\nabla\psi(\tau_0,x_0),\partial_\tau\psi(\tau_0,x_0))\ge 0.
\]
Hence $V$ is a viscosity supersolution of $F\ge 0$.

\medskip
\noindent\textbf{Step 2 (enforce the initial ordering at $\tau=0$).}
By \eqref{eq:init_mismatch_offset_roi} and $\varepsilon_{\mathrm{val}}\ge \varepsilon_0$,
\begin{multline*}
    \underline W(0,x)=\widehat W(0,x)-\varepsilon_{\mathrm{val}}\le 0=W_\lambda(0,x),
\qquad \\
\overline W(0,x)=\widehat W(0,x)+\varepsilon_{\mathrm{val}}\ge 0=W_\lambda(0,x),
\quad \forall x\in X
\end{multline*}
Moreover, since $\varepsilon_{\mathrm{val}}\ge c_{\mathrm{pde}}$, the functions
$\underline W = U-(\varepsilon_{\mathrm{val}}-c_{\mathrm{pde}})$ and
$\overline W = V+(\varepsilon_{\mathrm{val}}-c_{\mathrm{pde}})$ remain, respectively, a viscosity subsolution of $F\le 0$
and a viscosity supersolution of $F\ge 0$ on $(0,T]\times X$ (because shifting by a constant preserves the inequality direction for a proper operator).

\medskip
\noindent\textbf{Step 3 (comparison).}
By Claims 1--2, $\underline W$ is a viscosity subsolution of $F\le 0$ and $\overline W$ is a viscosity supersolution of $F\ge 0$ on $(0,T]\times X$,
and Step~2 provides the ordering at $\tau=0$ on $X$. By the comparison principle for proper, continuous HJB operators under the standing assumptions and $\lambda>0$,
\[
\underline W \le W_\lambda \le \overline W \quad \text{on } \mathcal D,
\]
which proves \eqref{eq:bracket_conclusion_offset_roi}. The sup-norm bound \eqref{eq:supnorm_bound_from_slack_offset_roi} follows immediately.
\end{proof}

\begin{theorem}[Reachable-set bracketing from value error]\label{thm:reach_bracket}
Assume \Cref{ass:S0,ass:S2}. Then, for every $\tau\in[0,T]$,
\begin{multline}
\label{eq:reach_brackets_clean}
\{x:\,\widehat W(\tau,x)<-\varepsilon_{\mathrm{val}}\}
\ \subseteq\ \{x:\,W_\lambda(\tau,x)<0\}\ \\
\subseteq\ \{x:\,\widehat W(\tau,x)\le +\varepsilon_{\mathrm{val}}\},
\end{multline}
where $\{x:\,W_\lambda(\tau,x)<0\}$ equals the strict backward-reachable set at horizon $\tau$.
\end{theorem}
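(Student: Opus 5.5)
The plan is to take the uniform value error bound $\|\widehat W - W_\lambda\|_\infty\le\varepsilon_{\mathrm{val}}$ as given. It is supplied either by \Cref{lem:contr_value_err} (Bellman-residual route) or by \Cref{thm:slack_implies_bracket_offset_roi} (PDE-slack route, on the ROI $\mathcal D$). The claimed inclusions \eqref{eq:reach_brackets_clean} are then pointwise consequences of the sandwich
\[
\widehat W(\tau,x)-\varepsilon_{\mathrm{val}}\ \le\ W_\lambda(\tau,x)\ \le\ \widehat W(\tau,x)+\varepsilon_{\mathrm{val}},
\]
which holds for every $(\tau,x)$ in the domain where the bound is certified. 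I would state explicitly that the inclusions are understood on that domain, e.g.\ $x\in X$ for the ROI version.

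For the first inclusion, fix $\tau$ and $x$ with $\widehat W(\tau,x)<-\varepsilon_{\mathrm{val}}$. The upper half of the sandwich gives $W_\lambda(\tau,x)\le \widehat W(\tau,x)+\varepsilon_{\mathrm{val}}<0$.

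For the second inclusion, argue by contraposition. If $\widehat W(\tau,x)>\varepsilon_{\mathrm{val}}$, the lower half gives $W_\lambda(\tau,x)\ge\widehat W(\tau,x)-\varepsilon_{\mathrm{val}}>0$, so in particular $W_\lambda(\tau,x)\not<0$. Equivalently, $W_\lambda<0$ forces $\widehat W\le W_\lambda+\varepsilon_{\mathrm{val}}<\varepsilon_{\mathrm{val}}$, which already lies in the stated non-strict set. This step is pure algebra.

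The remaining substantive step is identifying $\{x: W_\lambda(\tau,x)<0\}$ with the strict backward-reachable set at horizon $\tau$. I would invoke the sign-semantics result of \cite{SolankiHJRL2025} under \Cref{ass:S0,ass:S2}, transported to the forward variable via $W_\lambda(\tau,x)=V_\lambda(T-\tau,x)$. The argument runs in two directions.

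\emph{Reachable implies negative.} If some control drives the trajectory into the open target $\mathcal T$ within time $\tau$, the trajectory stays in $\mathcal T$ over a small interval. There one can use controls nearly achieving $\inf_u h<0$, by \Cref{ass:S2} together with continuity and compactness from \Cref{ass:A5}. Since $h=0$ off target (\Cref{ass:S0}) and the weights $e^{-\lambda r}>0$, this produces a strictly negative discounted cost.

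\emph{Not reachable implies zero.} If no trajectory enters $\mathcal T$, every cost integrand vanishes by \Cref{ass:S0}, so $W_\lambda=0$.

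The main obstacle is the first direction. One must choose a measurable control that is near-optimal on the target segment while preserving the trajectory's presence in $\mathcal T$, and must use openness of $\mathcal T$ to get positive dwell time. Since this is exactly the content of the cited result, I would quote it rather than reprove it. I would also remark that the sign conventions at the boundary $W_\lambda=0$ explain the strict versus non-strict thresholds in \eqref{eq:reach_brackets_clean}.
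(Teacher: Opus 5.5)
Your proposal matches the paper's proof. Both take the uniform bound $\|\widehat W-W_\lambda\|_\infty\le\varepsilon_{\mathrm{val}}$ as a hypothesis, read off the two inclusions pointwise from the sandwich $\widehat W-\varepsilon_{\mathrm{val}}\le W_\lambda\le\widehat W+\varepsilon_{\mathrm{val}}$, and identify $\{x: W_\lambda(\tau,x)<0\}$ with the strict backward-reachable set by citing the strict-sign result of \cite{SolankiHJRL2025}. You also state the inclusions only on the domain where the bound is actually certified (e.g.\ $x\in X$, which for Route~A also requires $\tau\ge\sigma$); this is a sensible refinement of the paper, which tacitly assumes the bound on all of $\{\tau\}\times\R^n$.
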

\begin{proof}
Fix $\tau\in[0,T]$ and assume the uniform value bound
$\|\widehat W-W_\lambda\|_{L^\infty(\{\tau\}\times\R^n)}\le \varepsilon_{\mathrm{val}}$.
Under \Cref{ass:S0,ass:S2}, strict backward reachability satisfies
$\mathsf{BRT}(\tau)=\{x:\,W_\lambda(\tau,x)<0\}$. See Sec.~3.3, Prop.~(Strict Sign) in \cite{SolankiHJRL2025}.

\emph{(Left inclusion).} Let $x$ satisfy $\widehat W(\tau,x)<-\varepsilon_{\mathrm{val}}$.
Then
\[
W_\lambda(\tau,x)\ \le\ \widehat W(\tau,x)+\varepsilon_{\mathrm{val}}\ <\ 0,
\]
so $x\in\mathsf{BRT}(\tau)$.

\emph{(Right inclusion).} Let $x\in\mathsf{BRT}(\tau)$, i.e.\ $W_\lambda(\tau,x)<0$.
Then
\[
\widehat W(\tau,x)\ \le\ W_\lambda(\tau,x)+\varepsilon_{\mathrm{val}}\ \le\ \varepsilon_{\mathrm{val}},
\]
so $x\in \{y:\,\widehat W(\tau,y)\le \varepsilon_{\mathrm{val}}\}$.

Combining the two gives
\begin{equation}
\begin{aligned}
    \{x:\,\widehat W(\tau,x)<-\varepsilon_{\mathrm{val}}\}
&\subseteq \{x:\,W_\lambda(\tau,x)<0\}  \\
&\subseteq \{x:\,\widehat W(\tau,x)\le \varepsilon_{\mathrm{val}}\} .
\end{aligned}
\end{equation}
\end{proof}

\section{SMT-Based Certification of Uniform Value Error on a Region of Interest}
\label{sec:smt_cert}

To certify a learned approximation $\widehat W$ on a prescribed compact region of interest (ROI)
$\mathcal D\subseteq[0,T]\times\R^n$, we seek a \emph{uniform value-function error bound}
\begin{equation}\label{eq:goal_eps_val}
\|\widehat W - W_\lambda\|_{L^\infty(\mathcal D)}\ \le\ \varepsilon_{\mathrm{val}},
\end{equation}
where $W_\lambda$ denotes the (unknown) true discounted travel value solving the forward HJB
\Cref{eq:HJB_forward_tau}. Since $W_\lambda$ is not available in closed form, we certify
$\varepsilon_{\mathrm{val}}$ indirectly via \emph{residual} certificates that are amenable to SMT.

In this section we assume that the relevant optimal control (or disturbance) admits a \emph{closed-form selector}
so that (i) the Hamiltonian $\tilde H$ can be evaluated without an explicit $\inf_{u\in U}$,
and (ii) the one-step Bellman minimization can be evaluated without an explicit $\inf_{a(\cdot)\in\mathcal A_\sigma}$.
Under these assumptions, both Route~A and Route~B reduce to checking the absence of a counterexample point
$(\tau,x)\in\mathcal D$ for a fully explicit scalar inequality.

Specifically, we use two routes:
\begin{enumerate}
\item[\textbf{A)}] a discrete-time \emph{Bellman-operator residual} bound on $\mathcal D_\sigma:=[\sigma,T]\times X$
which, by contraction of $T_{\sigma,\lambda}$ (\Cref{lem:contr_value_err}), implies \eqref{eq:goal_eps_val}
(on $\mathcal D_\sigma$) with $\varepsilon_{\mathrm{val}}=\varsigma_{\mathrm{op}}/(1-e^{-\lambda\sigma})$; and
\item[\textbf{B)}] a continuous-time \emph{HJB (PDE) slack} bound on $\mathcal D=[0,T]\times X$ which,
via \Cref{thm:slack_implies_bracket_offset_roi}, implies \eqref{eq:goal_eps_val} with
$\varepsilon_{\mathrm{val}}=\max\{\varepsilon_{\mathrm{pde}}/\lambda,\varepsilon_0\}$.
\end{enumerate}

In both routes, certification reduces to proving the \emph{absence} of a counterexample.
Concretely, to certify $\|g\|_{L^\infty(\mathcal D)}\le \eta$ for a scalar expression $g(\tau,x)$, we ask whether there exists
$(\tau,x)\in\mathcal D$ such that $|g(\tau,x)|>\eta$. This is naturally expressed as an existential satisfiability query
and can be encoded as a quantifier-free nonlinear real-arithmetic formula (with ODE semantics where needed) for SMT solving.

\begin{remark}[Closed-form selectors after learning $\widehat W$]
\label{rem:closedform_not_unrealistic}
The assumption that the minimization in Route~B (and, in structured cases, Route~A) admits a closed-form
selector is often realistic. Once a differentiable approximation $\widehat W$ is learned, its gradient
$p(\tau,x):=\nabla_x\widehat W(\tau,x)$ is available. The HJB
structure then induces an \emph{instantaneous} minimizing control through the Hamiltonian integrand
\begin{multline*}
    \mathcal H(\tau,x,p;u):=h(T-\tau,x,u)+p\cdot f(x,u),
\qquad \\
u^\star(\tau,x)\in\arg\min_{u\in U}\mathcal H\!\big(\tau,x,\nabla_x\widehat W(\tau,x);u\big)
\end{multline*}
Under \Cref{ass:A1,ass:A4,ass:A5}, the argmin is nonempty for each $(\tau,x)$ (compact $U$ and continuity in $u$),
so $\tilde H(\tau,x,\nabla_x\widehat W(\tau,x))$ can be evaluated by substituting $u^\star(\tau,x)$.
For common control-affine dynamics $f(x,u)=f_0(x)+G(x)u$ with convex compact $U$ (boxes/balls/polytopes) and
costs $h$ that are linear or convex in $u$, the selector $u^\star$ is often available in closed form
(e.g., saturation/projection or bang bang at extreme points).

In contrast, the one-step Bellman minimization in Route~A is an open-loop optimization over $a(\cdot)\in\mathcal A_\sigma$
with a terminal term $\widehat W(\tau-\sigma,y(\sigma))$. In general, the instantaneous selector $u^\star(\tau,x)$
provides an admissible candidate control for this problem, and it coincides with the one-step minimizer only under
additional structure (e.g., special convexity/linearity conditions or when $\widehat W$ is itself a fixed point).
\end{remark}

\subsection{Two SMT routes to a certified value bound}
\label{subsec:smt_routes}

We present two complementary SMT-amenable routes that deliver a certified uniform error
$\varepsilon_{\mathrm{val}}$ on a compact ROI $\mathcal D$.

\subsubsection{Route A: Bellman-operator residual on an invariant ROI}
\label{subsec:routeA_watertight}

We work with a compact state ROI $X\subset\R^n$ and define the spacetime ROI
\[
\mathcal D := [0,T]\times X,
\qquad
\mathcal D_\sigma := [\sigma,T]\times X .
\]

\paragraph{One-step closure (forward invariance over $\sigma$).}
To apply the one-step Bellman operator on $\mathcal D_\sigma$ while keeping all evaluations inside $X$,
we require the following closure property.

\begin{assumption}[One-step forward invariance of $X$ over $\sigma$]
\label{ass:one_step_invariance}
For every $x\in X$ and every admissible intra-step control $a(\cdot)$ on $[0,\sigma]$,
the corresponding trajectory $y(\cdot)$ solving
$y'(r)=f(y(r),a(r))$, $y(0)=x$ satisfies $y(\sigma)\in X$.
\end{assumption}

\paragraph{Closed-form one-step minimizer.}
Recall the one-step cost and Bellman operator from \Cref{subsec:discrete-time-ttg}:
\begin{multline*}
    c(\tau,x,a):=\int_{0}^{\sigma} e^{-\lambda r}\,h\!\big(T-\tau+r,\,y(r),\,a(r)\big)\,dr,
\qquad \\
(T_{\sigma,\lambda}\Psi)(\tau,x)
:= \inf_{a\in\mathcal A_\sigma}\Big\{c(\tau,x,a)+\gamma\,\Psi(\tau-\sigma,y(\sigma))\Big\}
\end{multline*}
In this version, we assume that for $\Psi=\widehat W$ the infimum is attained by a known closed-form input.

\begin{assumption}[Closed-form minimizer for the one-step problem]
\label{ass:closedform_bellman_min}
There exists a measurable selector $a^\star(\tau,x):[0,\sigma]\to U$ such that for every $(\tau,x)\in\mathcal D_\sigma$,
letting $y^\star(\cdot)$ solve $y^{\star\prime}(r)=f(y^\star(r),a^\star(\tau,x)(r))$, $y^\star(0)=x$, we have
\begin{equation}\label{eq:ass_bellman_attained}
(T_{\sigma,\lambda}\widehat W)(\tau,x)
=
c(\tau,x,a^\star(\tau,x))+\gamma\,\widehat W(\tau-\sigma,y^\star(\sigma)).
\end{equation}
\end{assumption}

\paragraph{Residual to certify.}
Define the \emph{explicit} one-step residual (no $\inf$ remaining)
\begin{equation}\label{eq:routeA_residual_explicit}
R_{\mathrm{op}}(\tau,x)
:=
\Big|
c(\tau,x,a^\star(\tau,x))
+\gamma\,\widehat W(\tau-\sigma,y^\star(\sigma))
-\widehat W(\tau,x)
\Big|.
\end{equation}
By \Cref{ass:closedform_bellman_min}, this equals $\big|(T_{\sigma,\lambda}\widehat W)(\tau,x)-\widehat W(\tau,x)\big|$ on $\mathcal D_\sigma$.

\paragraph{Certified value error from a certified operator residual.}
If an SMT certificate yields $R_{\mathrm{op}}(\tau,x)\le \varsigma$ for all $(\tau,x)\in\mathcal D_\sigma$, then
\[
\|T_{\sigma,\lambda}\widehat W-\widehat W\|_{L^\infty(\mathcal D_\sigma)}\le \varsigma,
\]
and by \Cref{lem:contr_value_err},
\begin{equation}\label{eq:routeA_val_bound_watertight}
\|\widehat W-W_\lambda\|_{L^\infty(\mathcal D_\sigma)}
\ \le\ \frac{\varsigma}{1-\gamma}
\ =\ \frac{\varsigma}{1-e^{-\lambda\sigma}}
\ =:\ \varepsilon_{\mathrm{val}}.
\end{equation}

\paragraph{SMT obligation (counterexample query).}
To certify $R_{\mathrm{op}}(\tau,x)\le \varsigma$ on $\mathcal D_\sigma$, partition $\mathcal D_\sigma$ into cells $\{C_i\}_{i=1}^N$
and, for each cell, ask whether there exists a violating point $(\tau,x)$:
\begin{equation}\label{eq:smt_routeA_watertight}
\exists (\tau,x)\in C_i:\quad R_{\mathrm{op}}(\tau,x)>\rho_{C_i},
\end{equation}
together with the ODE constraint defining $y^\star(\cdot)$ under $a^\star(\tau,x)$.
A $\delta$-\textsc{UNSAT} result for \eqref{eq:smt_routeA_watertight} implies
$\sup_{(\tau,x)\in C_i}R_{\mathrm{op}}(\tau,x)\le \rho_{C_i}$.
Setting $\varsigma_{\mathrm{op}}(\mathcal D_\sigma):=\max_i\rho_{C_i}$ then yields \eqref{eq:routeA_val_bound_watertight}.

\begin{remark}[Fallback when a closed-form one-step minimizer is unavailable]
If \Cref{ass:closedform_bellman_min} is not available, one can revert to the conservative ``strong residual''
formulation that introduces $\sup_{a\in\mathcal A_\sigma}|\cdot|$,
which yields an $\exists(\tau,x)\exists a$ SMT query but remains sound.
\end{remark}

\subsubsection{Route B: HJB (PDE) slack on an ROI}
\label{subsec:routeB_watertight}

Route~B certifies $\varepsilon_{\mathrm{val}}$ by bounding the forward HJB residual of $\widehat W$ on the ROI
$\mathcal D=[0,T]\times X$ directly, without introducing the discrete-time Bellman operator.

\paragraph{Closed-form Hamiltonian.}
Recall
\[
\tilde H(\tau,x,p)=\inf_{u\in U}\Big\{h(T-\tau,x,u)+p\cdot f(x,u)\Big\}.
\]
In this version, we assume that the infimum is attained by a known closed-form selector.

\begin{assumption}[Closed-form minimizer for the Hamiltonian]
\label{ass:closedform_hamiltonian_min}
There exists a measurable selector $u^\star(\tau,x,p)\in U$ such that for all $(\tau,x,p)$,
\begin{multline}\label{eq:ass_H_attained}
    \tilde H(\tau,x,p)
= 
h(T-\tau,x,u^\star(\tau,x,p))+ \\ p\cdot f(x,u^\star(\tau,x,p))
\end{multline}
\end{assumption}

\paragraph{Residual to certify.}
With $\tilde H$ evaluated via \eqref{eq:ass_H_attained}, define the forward HJB residual
\begin{multline}\label{eq:def_HJB_residual_forward}
R_{\mathrm{HJB}}(\widehat W)(\tau,x)
:= \partial_\tau\widehat W(\tau,x)-\tilde H\!\big(\tau,x,\nabla_x\widehat W(\tau,x)\big) \\
+\lambda\,\widehat W(\tau,x).
\end{multline}
We certify the uniform slack bound
\begin{equation}\label{eq:routeB_target_residual}
\|R_{\mathrm{HJB}}(\widehat W)\|_{L^\infty((0,T]\times X)} \le \varepsilon_{\mathrm{pde}},
\end{equation}
together with an initial mismatch bound
\begin{equation}\label{eq:routeB_target_init}
\varepsilon_0 \ \ge\ \sup_{x\in X}|\widehat W(0,x)|.
\end{equation}
Then \Cref{thm:slack_implies_bracket_offset_roi} yields
\begin{equation}\label{eq:routeB_epsval}
\|\widehat W-W_\lambda\|_{L^\infty(\mathcal D)}
\ \le\ \varepsilon_{\mathrm{val}}
\ :=\ \max\Big\{\frac{\varepsilon_{\mathrm{pde}}}{\lambda},\ \varepsilon_0\Big\}.
\end{equation}

\paragraph{SMT obligations (counterexample queries).}
Partition $(0,T]\times X$ into cells $\{C_i\}_{i=1}^N$ and choose per-cell bounds $\rho^{\mathrm{pde}}_{C_i}$.
For each cell, certify the absence of a violating point:
\begin{equation}\label{eq:smt_routeB_cell}
\neg\exists (\tau,x)\in C_i:\quad \big|R_{\mathrm{HJB}}(\widehat W)(\tau,x)\big|>\rho^{\mathrm{pde}}_{C_i}.
\end{equation}
A $\delta$-\textsc{UNSAT} result implies $\sup_{(\tau,x)\in C_i}|R_{\mathrm{HJB}}(\widehat W)(\tau,x)|\le \rho^{\mathrm{pde}}_{C_i}$.
Define $\varepsilon_{\mathrm{pde}}:=\max_i\rho^{\mathrm{pde}}_{C_i}$.
Similarly, certify the initial mismatch on $X$ by
\begin{equation}\label{eq:smt_routeB_init}
\neg\exists x\in X:\quad |\widehat W(0,x)|>\rho^0_X,
\end{equation}
and set $\varepsilon_0:=\rho^0_X$.
Finally, set $\varepsilon_{\mathrm{val}}$ as in \eqref{eq:routeB_epsval} and invoke
\Cref{thm:slack_implies_bracket_offset_roi} to conclude the certified enclosure
$\widehat W-\varepsilon_{\mathrm{val}}\le W_\lambda\le \widehat W+\varepsilon_{\mathrm{val}}$ on $\mathcal D$.

\begin{remark}[Fallback when a closed-form Hamiltonian is unavailable]
If \Cref{ass:closedform_hamiltonian_min} is not available, one can revert to the conservative
formulation, which introduces an $\exists(\tau,x)\exists u$ SMT query but remains sound.
\end{remark}

\begin{remark}[Infinite-horizon (stationary) specialization]
The certification pipeline applies verbatim to the infinite-horizon discounted problem.
In particular, when the dynamics and running cost are time-invariant and one seeks the
stationary discounted value $W_\lambda(x)$, the HJB reduces to
\[
\lambda W_\lambda(x) + \min_{u\in U}\Big\{ h(x,u) + \nabla W_\lambda(x)\cdot f(x,u)\Big\} = 0
\quad \text{on } X,
\]
(with the appropriate target/obstacle sign convention).
In this case the learned candidate depends only on $x$ and the PDE-residual route
(Route~B) certifies a uniform bound $\|W_c-W_\lambda\|_{L^\infty(X)}\le \varepsilon_{\mathrm{val}}$
from a certified residual bound and boundary mismatch exactly as in Theorem~3.
Thus, although our exposition is written on a spacetime ROI $[0,T]\times X$, the stationary
discounted setting is recovered by specializing to $\partial_\tau W\equiv 0$ and replacing
$[0,T]\times X$ by $X$.
\end{remark}


\subsection{SMT-backed verification via CEGIS}
\label{subsec:smt_cegis_unified}

We couple learning and certification in a counterexample-guided inductive synthesis (CEGIS) loop~\cite{abate2018counterexample}.
Starting from a parametric approximation $\widehat W(\,\cdot\,;\theta)$, we alternate:
(i) training on sampled data; and
(ii) parallel SMT calls on each cell $C_i$ to search for counterexamples to the desired residual bounds.

\paragraph{Route A (operator residual).}
For Route~A (\Cref{subsec:routeA_watertight}), the SMT obligation is the absence of a violating triple
$(\tau,x,a)$ in each cell, as encoded in \eqref{eq:smt_routeA_watertight}.
A $\delta$-\textsc{SAT} result returns a witness $(\tau^\star,x^\star,a^\star)$, which is used to enrich the training set near the violating region.
If all cells return $\delta$-\textsc{UNSAT}, then the certified strong-residual bound implies the value-error bound
\eqref{eq:routeA_val_bound_watertight}, yielding $\varepsilon_{\mathrm{val}}$ on $\mathcal D_\sigma$.

\paragraph{Route B (PDE slack).}
For Route~B (\Cref{subsec:routeB_watertight}), certification proceeds by ruling out counterexamples to the uniform band condition:
for each cell $C_i$, the SMT query \eqref{eq:smt_routeB_cell} searches for a violating triple $(\tau,x,u)$.
In addition, the initial mismatch on $X$ is certified by \eqref{eq:smt_routeB_init}.
If all cells are $\delta$-\textsc{UNSAT}, we obtain certified bounds $\varepsilon_{\mathrm{pde}}$ and $\varepsilon_0$, which define
$\varepsilon_{\mathrm{val}}$ via \eqref{eq:routeB_epsval}. Then \Cref{thm:slack_implies_bracket_offset_roi}
yields the uniform enclosure $\widehat W-\varepsilon_{\mathrm{val}}\le W_\lambda\le \widehat W+\varepsilon_{\mathrm{val}}$ on $\mathcal D$.

\paragraph{Certified reachable-set brackets.}
Once $\varepsilon_{\mathrm{val}}$ is certified (by either route), the strict reachability brackets follow from
\Cref{thm:reach_bracket}, i.e., \eqref{eq:reach_brackets_clean} holds for every $\tau\in[0,T]$.

Backends such as dReal support nonlinear real arithmetic with polynomial/transcendental nonlinearities and provide
$\delta$-complete reasoning for ODE constraints~\cite{gao2013dreal}.

\section{Experiment: Double Integrator}
\label{sec:experiment}
\todo{What about the enclosures?}

In this section, we demonstrate how the proposed framework can be used
to \emph{formally certify} a value function learned via reinforcement
learning.
The purpose of this experiment is not to benchmark learning performance,
but to provide a concrete, end-to-end instantiation of the certification
pipeline developed in \Cref{sec:hjb_add_eps,sec:rl_counterpart_eps,sec:smt_cert}. 

As a testbed, we consider the time-invariant double integrator,
a canonical control system for which Hamilton--Jacobi reachability
structure is well understood. The dynamics are given by \cref{eq:di_dyn}.
\begin{equation}
\dot x_{1} = x_{2},\qquad \dot x_{2} = u,\qquad u\in\mathcal U:=[-1,1] .
\label{eq:di_dyn}
\end{equation}

We take a discount rate $\lambda=1 \ s^{-1}$, and the running cost is chosen as follows
\begin{equation}
h(x) = \begin{cases}
-\alpha (r_{g}-|x|), & |x|\leq r_g,\\
0, & \text{otherwise},
\end{cases}
\qquad (\alpha=1,\ r_g=0.5),
\label{eq:goal_band_cost}
\end{equation}
where $\alpha$ and $r_g$ denote the cost scale and the target radius, respectively.
The choice of the travel cost as in \cref{eq:goal_band_cost} ensures that the value is negative inside the target tube and zero elsewhere, thereby
recovering strict backward-reachability semantics by sign. 

\subsection{Value function learning setup} We learn a stationary discounted value function $\widehat W : \mathbb{R}^2 \to \mathbb{R}$ using a reinforcement-learning procedure aligned with the discounted time-to-go (TTG) Bellman operator introduced in \cref{sec:rl_counterpart_eps}. The learned value serves as a candidate approximation to the viscosity solution of the discounted Hamilton--Jacobi--Bellman equation and is subsequently subjected to formal certification. 

Training combines temporal-difference (TD) learning with an explicit discounted HJB residual penalty and a semi-Lagrangian discretization of the Bellman map. Learning is restricted to a compact, axis-aligned region of interest $\mathcal{D} \subset \mathbb{R}^2$.

The value function is parameterized by a multi-layer perceptron with sinusoidal activation functions (SIREN), and the sole input to the network is the state.

All architectural choices, optimization hyperparameters, and training settings are summarized in \cref{tab:hyperparameters}.

\todo{Add dotted lines as sub-separators in the table.}
\todo{Is SIREN a type of nertwork or strictly the activation?}

\begin{table}[t]
\centering
\caption{Key hyperparameters and configuration parameters of the experiment.}
\label{tab:hyperparameters}
\begin{tabular}{ll}
\hline
\textbf{Problem and Reachability Setup} & \\ \hline
Dynamics & Double integrator (\cref{eq:di_dyn}) \\
Running cost & \cref{eq:goal_band_cost} \\
Target radius $r_g$ & $0.5$ \\
Cost scale $\alpha$ & $1$ \\ 
Control set $\mathcal{U}$ & $[-1,1]$ \\ \hline

\textbf{RL / Value Iteration} & \\ \hline
Time step $\Delta \tau$ & $0.05 \ s$ \\
Discount rate $\lambda$ & $1.0 \ s^{-1}$\\
Discount factor $\gamma$ & $e^{-\lambda \Delta \tau} \approx 0.9512294245$ \\
TD discretization & Semi-Lagrangian (2nd order) \\
Replay buffer size & $4 \times 10^{6}$ \\
Batch size & $163840$ \\
Training iterations & $10^{5}$ \\
Target update rate $\tau$ & $5 \times 10^{-3}$ \\ 
Loss weights & $w_{\mathrm{TD}} = 1,\; w_{\mathrm{HJB}} = 1,\; w_{\mathrm{Sob}} = 0$ \\ \hline

\textbf{Neural Network Architecture} & \\ \hline
Network type & SIREN MLP \\
Hidden layers & $(40, 40)$ \\
Activation & $\sin(\cdot)$ \\
Frequency parameter $w_0$  & $30$ \\ \hline

\textbf{Certification and ROI} & \\ \hline
Region of interest (ROI) & $[-2.5,\,2.5]^2$ \\
Certified residual tolerance $\varepsilon$ & $0.1$ \\
SMT precision $\delta$ & $10^{-8}$ \\
Certification backend & dReal \\ \hline
\end{tabular}
\end{table}


\subsection{From RL model to symbolic expression}
After training, we export the learned value function $\widehat W$ to an exact symbolic expression $\widehat W_{\mathrm{sym}}(x)$ by composing the learned linear layers and sinusoidal activation functions (without approximation). This yields a single closed-form SymPy expression representing the value function over the state space \cite{meurer2017sympy}.

This symbolic representation enables analytic differentiation and exact evaluation of the discounted Hamilton--Jacobi--Bellman residual. Crucially, it allows the certification problem to be expressed as a quantifier-free first-order formula over real variables, making it amenable to satisfiability modulo theories (SMT) solving. In this way, the symbolic export serves as the interface between the learned RL model and the subsequent formal verification procedure.

\subsection{PDE residual and SMT specification}
\todo{Should we use $\varepsilon_{\text{val}}$ instead?}
\todo{Add what is equation 75 based on.}
\todo{Should we use widehat?}

To formally certify the learned value function, we evaluate its violation of the discounted Hamilton--Jacobi--Bellman (HJB) equation, specialized to the stationary double-integrator case by setting $\partial_\tau W_\lambda(\tau,x)$ in \cref{eq:HJB_forward_tau} to $0$. This leads to
\begin{equation}
\lambda W(x_{1},x_{2})=h(x)\;+\;\min_{u\in\mathcal U}\!\big(\nabla W(x_{1},x_{2})\cdot[x_{2},\ u]\big) .
\label{eq:hjb_stationary}
\end{equation}
Thus, the residual of a candidate $\widehat W$ is
\begin{multline}
\mathcal R[\widehat W](x_{1},x_{2}):=\lambda \widehat W(x_{1},x_{2})-h(x) \\ -\min\{\,\widehat W_{x_{1}} x_{2}+\widehat W_{x_{2}} u_{\min},\ \widehat W_{x_{1}} x_{2}+ \widehat W_{x_{2}} u_{\max}\,\}.
\label{eq:residual}
\end{multline}
Because the Hamiltonian is affine in $u$, the minimum over $[u_{\min},u_{\max}]$ is attained at an endpoint.
We verify a \emph{uniform} bound on the HJB residual over $\mathcal D$ by searching, with dReal, for a counterexample to
\begin{equation}
\bigl|\mathcal R[\widehat W](x_{1},x_{2})\bigr|\le \varepsilon_{\text{val}}
\qquad\forall(x_{1},x_{2})\in\mathcal D.
\label{eq:uniform_bound}
\end{equation}
The SMT formula encodes the ROI box, the goal–band split, the endpoint selector for $\min$, and the two–sided inequality in \cref{eq:uniform_bound}. We use dReal’s $\delta$–decision procedure and report $\delta$–UNSAT as certification that no counterexample exists within precision~$\delta$.

\subsection{Results}

\textbf{Qualitative: } The learned value is strictly negative in the target band and approaches~$0$ outside; the induced bang–bang policy splits $\mathcal D$ into the expected two regions.

\textbf{Numerical cross-check: } We also solved the discounted stationary HJB on the same ROI via a semi-Lagrangian fixed-point scheme (explicit in drift, endpoint minimization). \Cref{fig:pde-vs-nn} compares the numerical solution $W_{\text{num}}$ with the network $W_{\text{nn}}$ and the pointwise difference $\Delta=W_{\text{num}}-W_{\text{nn}}$. The two fields are visually consistent; discrepancies concentrate near the switching curve. 
\todo{Add justification for the above observation.}
Moreover, the color scale shows $|\Delta|=\mathcal{O}(10^{-2})$, which is \emph{consistent} with the SMT-certified residual tolerance $\varepsilon_{\text{val}}=0.1$, while $\Delta$ is a value-error proxy (not the residual), this provides a quick visual sanity check of the certificate.

\begin{figure*}[t]
  \centering
  \includegraphics[width=\textwidth]{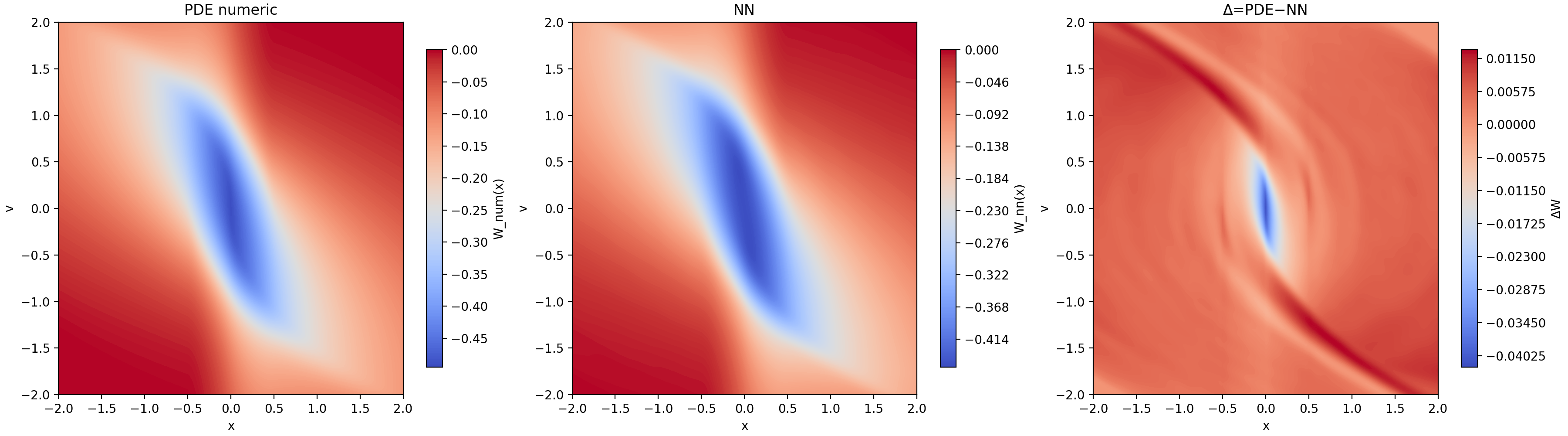}
  \caption{PDE numeric solution, NN prediction, and their difference $\Delta=W_{\text{num}}-W_{\text{nn}}$ on the same grid. Errors are small and localized near the bang–bang switching locus and the ROI boundary; the observed magnitude ($\approx 10^{-2}$) is consistent with the SMT-certified residual tolerance $\varepsilon=0.1$.}
  \label{fig:pde-vs-nn}
\end{figure*}

\textbf{SMT certification:} Using the exported $W$ and analytic gradients, dReal returns
\[
\texttt{status}=\texttt{UNSAT},\quad \delta=10^{-8},\quad \varepsilon_{\text{val}}=0.1 .
\]
Thus, up to $\delta$–weakening, there is \emph{no} $(x,v)$ in the ROI for which
\begin{multline*}
    \bigl|\lambda W(x_{1},x_{2})-h(x) \\ -\min\{\,W_{x_{1}} x_{2}+W_{x_{2}} u_{\min},\ W_{x_{1}} x_{2}+W_{x_{2}} u_{\max}\,\}\bigr|>\,0.1 .
\end{multline*}
This yields a certified uniform PDE residual bound of $\varepsilon_{\text{val}}=0.1$ on the entire box.
\smallskip

\noindent\textit{Remark (tightening $\varepsilon_{\text{val}}$).}
The $\varepsilon_{\text{val}}$ reported here is intentionally conservative and set by the user in the SMT query. Smaller $\varepsilon_{\text{val}}$ values are attainable with richer function classes, longer training, or stronger PDE regularization; the same pipeline scales unchanged.

\subsection{Reproducibility}
\todo{Rename scripts and update this because currently DQN is not used.}

All experiments were executed headless in WSL\,2 (Ubuntu~22.04, Python~3.10, PyTorch~2.9), using the script \texttt{experiments/run\_double\_integrator\_dqn.py}. Symbolic export and the PDE SMT check are performed by \texttt{tools/symbolify\_rl\_model.py} and \texttt{tools/smt\_check\_pde\_dreal.py}, respectively. The ROI, discount, control bounds, and cost parameters used by dReal are read directly from the training run’s JSON metadata to ensure consistency.

\section{Conclusion}
\label{sec:conclusion}

This paper introduced a framework for certifying Hamilton Jacobi reachability that has been learned via reinforcement learning. The central insight is that an additive-offset to the value induces a constant offset in the discounted HJB equation. This identity allows certified bounds on Bellman or HJB residuals to be translated directly into provably correct inner and outer enclosures of backward-reachable sets.

We presented two SMT-based certification routes: one that is operator-based and another that is PDE-based. Both were cast as counterexample-driven satisfiability queries over a compact region of interest and compatible with CEGIS. The approach was validated on a double-integrator system, where an RL-learned neural value function was formally certified.

While counterexamples identified during certification have been used to heuristically refine or retrain the value function, the framework does not endow this process with convergence guarantees toward a certifiable solution. Reinforcement learning therefore serves to generate candidate value functions, while formal certification acts as the acceptance criterion for semantic correctness.

Future research directions include the development of learning algorithms with explicit convergence guarantees toward certifiable value functions, the reduction of conservatism in uniform certificates over regions of interest, and extensions of the framework to stochastic or time-varying optimal control problems. Improving the scalability of formal certification to higher-dimensional systems also remains an important challenge.

\todo{Add steps related to relaxing assumptions.}

\bibliographystyle{plain}              
\bibliography{references}   

\end{document}